\newcolumntype{d}{D{.}{.}{-1} } 
\numberwithin{equation}{section}
\newcommand\eps\varepsilon
 \newcommand{\A}{\mathcal{A}}
\newtheorem{theorem}{Theorem}[section]
\newtheorem{definition}[theorem]{Definition}
\newtheorem{lemma}[theorem]{Lemma}
\newtheorem{corollary}[theorem]{Corollary}
\newtheorem{proposition}[theorem]{Proposition}
\theoremstyle{remark}
\newtheorem{remark}[theorem]{Remark}
\newcommand{\R}{\mathbb{R}} 
\newcommand{\Q}{\mathbb{Q}} 
\newcommand{\Z}{\mathbb{Z}} 
\newcommand{\C}{\mathbb{C}} 
\title[The Hudson theorem in LCA groups and infinite quantum spin systems]{The Hudson theorem in LCA groups and infinite  quantum spin systems}
\author{Fabio Nicola}
\address[Fabio Nicola]{Dipartimento di Scienze Matematiche, Politecnico di Torino, Corso Duca degli Abruzzi 24, 10129 Torino, Italy}
\email{fabio.nicola@polito.it}
\author{Federico Riccardi}
\address[Federico Riccardi]{Dipartimento di Scienze Matematiche, Politecnico di Torino, Corso Duca degli Abruzzi 24, 10129 Torino, Italy}
\email{federico.riccardi@polito.it}
\begin{document}

    \keywords{Hudson theorem, Wigner distribution, LCA group, quantum spin system}
    \subjclass[2020]{Primary 43A70, 81S30; Secondary 81P45, 82B10}
    
    \begin{abstract}
    The celebrated Hudson theorem states that the Gaussian functions in $\R^d$ are the only functions whose Wigner distribution is everywhere positive. Motivated by quantum information theory, D. Gross proved an analogous result on the Abelian group $\mathbb{Z}_d^n$, for $d$ odd --- corresponding to a system of $n$ qudits --- showing that the Wigner distribution is nonnegative only for the so-called stabilizer states. Extending this result to the thermodynamic limit of finite-dimensional systems naturally leads us to consider general $2$-regular LCA groups that possess a compact open subgroup,  where the issue of the positivity of the Wigner distribution is currently an open problem. We provide a complete solution to this question by showing that if the map $x\mapsto 2x$ is measure-preserving, the functions whose Wigner distribution is nonnegative are exactly the subcharacters of second degree, up to translation and multiplication by a constant. Instead, if the above map is not measure-preserving, the Wigner distribution always takes negative values. We discuss in detail the particular case of infinite sums of discrete groups and infinite products of compact groups, which correspond precisely to infinite quantum spin systems. Further examples include $n$-adic systems, where $n\geq 2$ is an arbitrary integer (not necessarily a prime), as well as solenoid groups.
    
    \end{abstract}
    
    \maketitle
    
\section{Introduction}
Let $f\in L^2(\R^d)$. The \textit{Wigner distribution} of $f$ is the function $Wf$ in $\R^d\times\R^d$ given by 
\begin{equation}\label{eq wigner_rd}
Wf(x,\xi)=\int_{\R^d}f\Big(x+\frac{y}{2}\Big)\overline{f\Big(x-\frac{y}{2}\Big)} e^{-2\pi i \xi\cdot y}\, dy\qquad (x,\xi)\in \R^d\times\R^d. 
\end{equation}
Observe that $Wf$ is continuous and takes real values.

The concept of the Wigner distribution was introduced by E. Wigner in \cite{wigner}, intending to link a phase-space probability distribution to a quantum system (of which the function $f$ is the wave function in the Schr\"odinger representation). As might be expected, this perspective turned out to be enlightening in various ways; for instance, as noted in \cite{wigner}, the Wigner distribution of a free particle is simply transported along the corresponding Hamiltonian classical flow, in perfect agreement with the correspondence principle (see also \cite{cordero_rodino_CMP} for an elaboration of this idea in the presence of a potential). Since then, this phase-space picture has seen spectacular developments, with applications to mathematical physics, partial differential equations, harmonic analysis and mathematical signal processing; see \cite{abreu_2019,fefferman,grochenig_book,lerner_book,lieb_book,luef_2018,werner_1984} and especially \cite{degosson_wigner}.

One can see that if $\|f\|_{L^2}=1$ \textit{and} $Wf\in L^1(\R^d)$, then 
\[
\int_{\R^d\times\R^d} Wf(x,\xi)\, dx\,d\xi=1,
\]
as one expects from a probability distribution. However, it turns out that $Wf$ in general also takes negative values. Although this does not undermine the original idea (because only the averages of $Wf$ on the Planck scale are physically meaningful), it suggests the problem of characterizing the functions $f\in L^2(\R^d)$ so that
$Wf>0$ everywhere. The solution to this problem is provided by the celebrated Hudson theorem, which states that if $f\in L^2(\R^d)$, then $Wf>0$ everywhere in $\R^d\times\R^d$ if and only if $f$ has the form 
\[
f(x)= e^{-\pi x\cdot Ax+2\pi b\cdot x+c}
\]
where $A\in GL(d,\C)$ is an invertible $d\times d$ matrix over $\C$ with positive definite real part, $b\in \C^d$, $c\in\C$; see the original paper \cite{hudson} or \cite[Theorem 4.4.1]{grochenig_book}.

Motivated by quantum information theory, D. Gross \cite{gross2006hudson} later extended Hudson's theorem to systems of $n$ qudits, that is, to the finite Abelian group $\mathbb{Z}_d^n$, where $d\geq 3$ is odd. This restriction on $d$ is due to the fact that, as can be seen from the expression in \eqref{eq wigner_rd}, the definition of the Wigner distribution in $\R^d$ extends naturally to any locally compact Abelian (LCA) group, provided the map $x\mapsto 2x:=x+x$ is a topological isomorphism (see Definition \ref{def wigner}). When this condition is met, the group is called $2$-regular (see Definition \ref{def 2regular}) and $\mathbb{Z}_d^n$ is, indeed, $2$-regular if and only if $d$ is odd.
Now, it was proved in \cite{gross2006hudson} that, for $f\in \C^{dn}=\ell^2(\mathbb{Z}_d^n)$, we have $Wf\geq0$ everywhere if and only if $f$ is a so-called \textit{stabilizer state} --- a class of states first introduced by D. Gottesman in \cite{gottesman1998heisenberg} in the case of qubits ($d=2$), which have seen important applications in quantum error correction codes \cite{nielsen2010quantum}. It is remarkable how the (non)positivity of the Wigner distribution (in $\mathbb{Z}_d^n$) is closely linked to several other topics of quantum information and computation, such as classical simulation, quantum correlation, magic state
distillation and contextuality; see \cite{howard_2014,mari2012,siyouri_2016,veitch_2012}.

At this point, it is natural to ask what can be said, similarly, about the \textit{thermodynamic limit} of a system of $n$ qudits, as $n\to\infty$. The corresponding quantum systems, which have infinitely many degrees of freedom, are usually referred to as quantum spin systems \cite{bratteli} (see also \cite{naaijkens} for a gentle introduction). As we shall see, this problem naturally leads us to consider the framework of $2$-regular LCA groups that contain a compact open subgroup. In this generality, the issue of the positivity of the Wigner distribution is an open problem, although partial results are known --- e.g., the case of $2$-regular totally disconnected groups is considered in \cite{beny} assuming, however, that the function $f$ that satisfies $Wf\geq0$ is continuous. 
 In this note, we provide a complete solution to this problem. 

The characterization we are going to describe makes use of the notion of \textit{character of second degree} of an LCA group --- a concept first introduced by A. Weil \cite{weil64}, which will be recalled in Section \ref{sec caratteri} below. Also, in an LCA group, we say that a continuous function $\phi:A\to\C$ is a \textit{subcharacter of second degree} if there exists a compact open subgroup $H\subset A$ such that $\phi(x)=0$ for $x\in A\setminus H$ and the restriction of $\phi$ to $H$ is a character of second degree of $H$. We observe that subcharacters of second degree exist if and only if $A$ contains a compact open subgroup. We finally recall that, as already observed, in a 2-regular LCA group $A$, the Wigner distribution $Wf$ of a function $f\in L^2(A)$ is well defined as a function in $A\times\widehat{A}$, where $\widehat{A}$ denotes the dual group (see Definition \ref{def wigner} below). 
\begin{theorem}\label{thm mainteo 1} Let $A$ be a $2$-regular LCA group that contains a compact open subgroup. Suppose that the map $x\mapsto 2x$ is measure-preserving. 

Then, the following conditions are equivalent for $f\in L^2(A)$.
\begin{itemize}
    \item[(a)] $Wf(x,\xi)\geq0$ for every $(x,\xi)\in A\times\widehat{A}$.
\item[(b)] There exist  a subcharacter of second degree $\phi$ of $A$, $c\in\C$ and $y\in A$ such that
    \[
    f(x)=c\, \phi(x-y),\qquad \textrm{for almost every }x\in A. 
    \]
\end{itemize}
\end{theorem}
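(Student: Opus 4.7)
The plan is to establish the two implications separately. For $(b) \Rightarrow (a)$, covariance of the Wigner distribution under translations in $A$ reduces the problem to the case $c = 1$, $y = 0$, i.e., to showing $W\phi \geq 0$ for a subcharacter of second degree $\phi$ supported on a compact open subgroup $H \subset A$. Using the defining bicharacter identity $\phi(u+v) = \phi(u)\phi(v)\rho(u,v)$ that characterizes characters of second degree, one obtains the factorization $\phi(x + w)\overline{\phi(x-w)} = \phi(2w)\,\rho(2x, w)$, which cleanly separates the $x$- and $w$-dependences. After the change of variable $y = 2w$ --- legitimate, with trivial modulus thanks to the measure-preservation hypothesis --- the integral defining $W\phi(x, \xi)$ becomes the Fourier transform of $w \mapsto \phi(2w)$ paired against a character of $w$ depending on $(x, \xi)$. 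Since $\phi(2w)$ is supported on a compact open subgroup and is essentially a character there, this evaluates to a nonnegative multiple of the indicator function of a translate of a closed subgroup of $A \times \widehat{A}$, so $W\phi \geq 0$.

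For $(a) \Rightarrow (b)$, I would normalize $\|f\|_{L^2(A)} = 1$ and pass to the ambiguity function $Af(z) = \langle f, \pi(z) f\rangle_{L^2(A)}$ for $z \in A \times \widehat{A}$, where $\pi(z)$ is the Heisenberg--Weyl unitary (phase-space translation) associated to $z$. Up to a unimodular factor, $Af$ is the symplectic Fourier transform of $Wf$; hence, by Bochner's theorem, the hypothesis $Wf \geq 0$ with $\int_{A \times \widehat{A}} Wf = 1$ yields the pointwise bound $|Af(z)| \leq 1$, while Moyal's identity gives $\|Af\|_{L^2}^2 = 1$. Set $M = \{z : |Af(z)| = 1\}$. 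The equality case of Cauchy--Schwarz applied to $|\langle f, \pi(z) f\rangle| \leq \|f\|\,\|\pi(z)f\|$ characterizes $z \in M$ by the eigenvalue relation $\pi(z) f = \lambda(z) f$ with $\lambda(z) \in \T$. The Weyl commutation relations $\pi(z)\pi(w) = c(z,w)\,\pi(z+w)$ then force $M$ to be a closed subgroup of $A \times \widehat{A}$ and $\lambda: M \to \T$ to be a character of second degree of $M$, with associated bicharacter determined by the cocycle $c$.

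The final step is to identify $(M, \lambda)$ canonically with a subcharacter on $A$. Combining Moyal's identity, the positivity of $Wf$, and an uncertainty-principle argument adapted to the 2-regular LCA setting, one shows that $M$ must contain a compact open subgroup. The eigenvector condition for $\{\pi(z)\}_{z \in M}$, together with the irreducibility of the Schr\"odinger representation on $L^2(A)$, then forces $M$ to be a Lagrangian (maximal isotropic) closed subgroup of the symplectic LCA group $A \times \widehat{A}$. A duality argument --- which makes essential use of the measure-preservation hypothesis to handle the halving implicit in the symplectic form --- identifies such a Lagrangian datum $(M, \lambda)$ bijectively with a pair $(H, \phi)$ consisting of a compact open subgroup $H \subset A$ and a character of second degree $\phi$ of $H$. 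The associated subcharacter $\mathbf{1}_H \phi$ satisfies the same eigenvalue equations as $f$, and by the one-dimensionality of the joint eigenspace $f$ must be a scalar multiple of a translate of $\mathbf{1}_H \phi$, yielding (b). The most delicate step I anticipate is this final correspondence between the abstract Lagrangian datum on phase space and the concrete subcharacter on $A$: this is precisely where the 2-regular LCA structure and the measure-preservation of $x \mapsto 2x$ come together.
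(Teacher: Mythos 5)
Your direction (b)$\Rightarrow$(a) is fine and is actually a different (more elementary) route than the paper's: the paper does not compute $W\phi$ directly, but shows $\int|Wf|=\int Wf=\|f\|_{L^2}^2$ via the relation \eqref{eq:relation Wigner STFT} and Fourier inversion, while your factorization argument in effect reproves Proposition \ref{prop:Wigner of S-states} by hand, which is legitimate (using $2H=H$, which follows from measure preservation). The problems are in (a)$\Rightarrow$(b). First, you take for granted that $Wf\in L^1(A\times\widehat A)$ with $\int_{A\times\widehat A}Wf=1$. For a general $f\in L^2(A)$ one only knows that $Wf$ is continuous and in $L^2$; integrability and the identity $\int Wf=\|f\|_{L^2}^2$ are exactly the delicate point where Lieb's formal argument breaks down outside the Feichtinger algebra, and the paper devotes Lemma \ref{lem:Fourier inversion} (an approximate-identity argument) to deducing them from $Wf=\mathscr{F}\,\mathcal{U}\mathcal{A}f\geq 0$ with $\mathcal{U}\mathcal{A}f$ continuous and in $L^2$. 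Invoking Bochner does not help: Bochner presupposes a finite measure, which is what has to be proved, and the bound $|\mathcal{A}f|\leq\|f\|_{L^2}^2$ is in any case immediate from Cauchy--Schwarz and uses no positivity.

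Second, and this is the central gap: the assertion that $M=\{z:\,|\mathcal{A}f(z)|=1\}$ contains a compact open subgroup is precisely the heart of the theorem, and ``an uncertainty-principle argument adapted to the 2-regular LCA setting'' is not an argument. For generic $f$ one has $M=\{0\}$; some quantitative mechanism must convert $Wf\geq 0$ into largeness of $M$, and your sketch supplies none. The paper closes this by a completely different device: since $\lambda_2=1$, positivity plus Lemma \ref{lem:Fourier inversion} give $\int|V_{f^{\vee}}f|=\int|Wf|=\int Wf=\|f\|_{L^2}^2$, i.e.\ equality in the sharp bound $\|V_gf\|_{L^1}\geq\|f\|_{L^2}\|g\|_{L^2}$ of Theorem \ref{th:Wehrl entropy}, whose equality case immediately says $f$ is an $S$-state --- no eigenvector/Lagrangian machinery needed. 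If you want to stay close to your phase-space picture, a workable repair is: $0\leq Wf\leq 1$ (by \eqref{eq:relation Wigner STFT} and $\lambda_2=1$), and $\int Wf=\int (Wf)^2=1$ (Moyal) force $Wf=\chi_S$ with $|S|=1$, hence $|\{V_{f^{\vee}}f\neq 0\}|=1$, and Theorem \ref{th:equivalence for STFT} then yields that $f$ is an $S$-state; but some such input is indispensable. Your remaining steps (maximal isotropy of $M$, the correspondence $(M,\lambda)\leftrightarrow(H,\phi)$, one-dimensionality of the joint eigenspace) are themselves nontrivial --- they essentially amount to the stabilizer results quoted from the literature --- but they only become relevant after the unproven crux.
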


In the following, for the sake of brevity, functions $f$ as in Theorem \ref{thm mainteo 1} (b) (with $c\not=0$) will shortly be referred to as \textit{$S$-states} (see Definition \ref{def Sstate}). In fact, as a consequence of \cite[Theorem 1.4]{nicJMPA}, they coincide with the so-called stabilizer states, as defined in \cite{nicJMPA} in the general context of an LCA group (not necessarily $2$-regular) that contains a compact open subgroup. This implies that, when $A=\mathbb{Z}_d^n$ ($d$ odd), we recapture the result in \cite{gross2006hudson}. However, we notice that the above-mentioned definition of $S$-state is much more explicit than that of stabilizer state. We also observe that an explicit description (construction) of all $S$-states in any finite Abelian group is provided in \cite[Section 2.4]{nicJMPA}; see also \cite{hostens2005stabilizer,kaiblinger,kaiblinger09,nicola23} and Remark \ref{rem rem2} below. 

Although $Wf$ might take negative values, we will show that, under the assumption that the map $x\mapsto 2x$ is measure-preserving, a properly smoothed version of $Wf$ is always nonnegative (see \cite[Theorem 4.4.4]{grochenig_book} for the analogous result in $\R^d$). We refer to Section \ref{sec caratteri} below for the relevant terminology.
\begin{theorem}\label{th:Wigner smoothed is nonnegative}
    Let $A$ be a 2-regular LCA group that contains a compact open subgroup. Suppose that the map $x\mapsto 2x$ is measure-preserving.
    
    Let $G \subset A \times \widehat{A}$ be a compact subgroup containing a maximal compact open isotropic subgroup. Then, for every $f \in L^2(A)$, we have
    \begin{equation*}
         Wf \ast \chi_G (x,\xi) \geq 0 \quad \forall \, (x,\xi) \in A \times \widehat{A}.
    \end{equation*}
\end{theorem}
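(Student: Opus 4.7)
The plan is to express $Wf \ast \chi_G$ as a squared magnitude via a Moyal-type convolution identity applied to an $S$-state whose Wigner distribution is a positive multiple of $\chi_{H_0}$, where $H_0 \subset G$ is the maximal compact open isotropic subgroup guaranteed by hypothesis.

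First I would reduce to the case $G = H_0$. Since $H_0$ is open in $A \times \widehat{A}$ and hence open in the compact group $G$, the quotient $G/H_0$ is compact and discrete, therefore finite, so $G = \bigsqcup_{i=1}^N (z_i + H_0)$ for some $z_1, \dots, z_N$. Then $\chi_G = \sum_{i=1}^N \chi_{z_i + H_0}$, and each $Wf \ast \chi_{z_i + H_0}$ is a translate of $Wf \ast \chi_{H_0}$, so it suffices to prove $Wf \ast \chi_{H_0} \geq 0$.

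Next I would attach to $H_0$ an $S$-state $\phi \in L^2(A)$ with
\[
W\phi = c\, \chi_{H_0} \quad \text{for some } c > 0.
\]
This is the LCA analog of the Gaussian associated with a Lagrangian subspace in $\R^{2d}$, or of Gross's stabilizer state associated with a Lagrangian in $\mathbb{Z}_d^n \times \widehat{\mathbb{Z}_d^n}$. Concretely, with $K = \pi_A(H_0) \subset A$ the projection of $H_0$ onto $A$, one takes $\phi = \chi_K \cdot \psi$, where $\psi$ is a character of second degree on $K$ whose associated skew bicharacter matches the restriction of the symplectic pairing to $H_0$; verification of $W\phi = c\,\chi_{H_0}$ reduces to a Fourier inversion on $K$, the measure-preserving hypothesis making the change of variable $y \mapsto 2y$ legitimate. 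Combining this with the Moyal-type identity
\[
(Wf \ast W\phi)(z) = |V_{\check\phi} f(z)|^2, \qquad z \in A \times \widehat{A},
\]
where $\check\phi(x) := \phi(-x)$ and $V_{\check\phi}$ denotes the short-time Fourier transform with window $\check\phi$ --- this identity follows on any $2$-regular LCA group from the covariance of the Wigner distribution under time-frequency shifts together with the Moyal orthogonality relation --- yields $Wf \ast \chi_{H_0} = c^{-1}\,|V_{\check\phi} f|^2 \geq 0$, as desired.

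The main obstacle is producing the $S$-state $\phi$ attached to an arbitrary maximal compact open isotropic $H_0 \subset A \times \widehat{A}$. In this LCA generality, polarizations need not split, so one must verify the existence of a character of second degree on $K$ whose skew bicharacter agrees with the symplectic pairing restricted to $H_0$, and check that $\phi = \chi_K \cdot \psi$ lies in $L^2(A)$ and has Wigner proportional to $\chi_{H_0}$. It is precisely here that the measure-preserving assumption, combined with the structural theory of characters of second degree from Section \ref{sec caratteri}, is essential.
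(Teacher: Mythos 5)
Your proposal is correct and follows essentially the same route as the paper: reduce $G$ to a finite disjoint union of cosets of a maximal compact open isotropic subgroup, produce an $S$-state whose Wigner distribution is the indicator of (a coset of) that subgroup --- this existence step is exactly the content of Proposition \ref{prop:Wigner of S-states}, proved via Theorem \ref{thm thm26} and the exact sequence \eqref{eq:exact sequence} --- and conclude by Moyal's formula, your convolution identity $(Wf\ast W\phi)(z)=|V_{\phi^{\vee}}f(z)|^2$ being just the paper's pointwise pairing $\langle Wf, Wg_z\rangle=|\langle f,g_z\rangle|^2$ with the covariance property \eqref{eq:covariant property Wigner} built in. The only caveat is that your explicit $\phi=\chi_K\cdot\psi$ yields $W\phi$ equal to a positive multiple of the indicator of a \emph{coset} of $H_0$ (the second-degree character associated with $\beta$ is determined only up to a character of $K$), which is harmless for positivity and can be removed by a modulation of $\phi$.
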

Hence, the maximal compact open isotropic subgroups of $A\times \widehat{A}$, together with their translates, provide a coarse-graining of the phase space compatible with the uncertainty principle and can be regarded as the analogue of \textit{quantum blobs}, as defined by M. de Gosson in $\R^d\times\R^d$ (see \cite{degosson_wigner}). 
    
We now consider the case where the map $x\mapsto 2x$ is \textit{not} measure-preserving. 
\begin{theorem}\label{thm mainteo2}
    Let $A$ be a $2$-regular LCA group that contains a compact open subgroup. Suppose that the map $x\mapsto 2x$ is not measure-preserving. Then, for every $f\in L^2(A)\setminus\{0\}$, $Wf(x,\xi)<0$ for some $(x,\xi)\in A\times \widehat{A}$. 
\end{theorem}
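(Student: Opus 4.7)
The plan is to argue by contradiction. Assume $Wf(x,\xi)\geq 0$ for every $(x,\xi)\in A\times\widehat A$, for some $f\in L^2(A)$ which, after rescaling, we may take to satisfy $\|f\|_{L^2(A)}=1$. Let $\delta>0$ denote the modulus of the doubling automorphism $x\mapsto 2x$, so that $\delta\neq 1$ by hypothesis. The strategy rests on three norm identities/estimates for $Wf$, combined with the elementary fact that in the presence of a compact open subgroup one always has $\delta\leq 1$.

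The three facts are as follows. First, the marginal identity $\int_{\widehat A}Wf(x,\xi)\,d\xi=|f(x)|^2$, together with Fubini (applicable since $Wf\geq 0$), yields $\|Wf\|_{L^1}=\|f\|_{L^2}^2=1$. Second, a direct Plancherel-based computation gives the Moyal identity $\|Wf\|_{L^2}^2=\|f\|_{L^2}^4=1$; here the factor $\delta$ introduced by the substitution $y=2z$ in the inner integral cancels exactly the factor $\delta^{-1}$ arising from the change of variables $(x,z)\mapsto(x+z,x-z)$ on $A\times A$, whose modulus is $\delta$ (the associated matrix has determinant $-2$). Third, Cauchy--Schwarz applied to the defining integral for $Wf$, after the substitution $y=2z$, yields $|Wf(x,\xi)|\leq \delta\|f\|_{L^2}^2=\delta$, hence $\|Wf\|_{L^\infty}\leq\delta$.

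The key observation is that $\delta\leq 1$. Indeed, $A$ contains a compact open subgroup $K$; since $K$ is closed under addition, $2K=\{k+k:k\in K\}\subseteq K$, so $\delta=|2K|/|K|\leq 1$, with equality precisely when $2K=K$, i.e., when the doubling is measure-preserving. Under the standing hypothesis, therefore, $\delta<1$.

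To close the argument, the nonnegativity of $Wf$ gives the pointwise bound $(Wf)^2\leq\|Wf\|_{L^\infty}\cdot Wf$, which integrates to $1=\|Wf\|_{L^2}^2\leq\|Wf\|_{L^\infty}\|Wf\|_{L^1}\leq \delta<1$, a contradiction. The main (mild) obstacle is the careful bookkeeping of $\delta$-factors across the three estimates: one must verify that Moyal's identity has no $\delta$-dependence while the $L^\infty$ bound carries exactly one factor of $\delta$, so that the final inequality $1\leq\delta$ from nonnegativity meets exactly the opposite inequality $\delta\leq 1$ coming from the inclusion $2K\subseteq K$.
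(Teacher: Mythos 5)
Your steps (b) and (c), and the closing interpolation $\|Wf\|_{L^2}^2 \leq \|Wf\|_{L^\infty}\|Wf\|_{L^1}$ combined with $\|Wf\|_{L^\infty}\leq \lambda_2<1$, are correct and constitute a genuinely different (and more self-contained) route than the paper's: the paper instead invokes the sharp Wehrl-type bound $\|V_{f^{\vee}}f\|_{L^1}\geq \|f\|_{L^2}^2$ of Theorem \ref{th:Wehrl entropy}, whereas you essentially inline the proof of that bound at the level of $Wf$, with the factor $\lambda_2$ appearing in the $L^\infty$ estimate. Your bookkeeping there is right: Moyal's identity carries no $\lambda_2$ (it is \eqref{eq:Moyal formula}, stated in the paper without the measure-preserving assumption, so you may simply cite it rather than argue with a ``determinant $-2$'' heuristic, which on a general LCA group should be phrased via the modulus of the automorphism $(x,z)\mapsto (x+z,x-z)$), and Cauchy--Schwarz after the substitution $y=2z$ indeed gives $\|Wf\|_{L^\infty}\leq \lambda_2\|f\|_{L^2}^2$.

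The gap is step (a). The marginal identity $\int_{\widehat A}Wf(x,\xi)\,d\xi=|f(x)|^2$ is not available for a general $f\in L^2(A)$: for fixed $x$ the function $F_x(y)=f(x+2^{-1}y)\overline{f(x-2^{-1}y)}$ is merely in $L^1(A)$, its Fourier transform $Wf(x,\cdot)$ need not lie in $L^1(\widehat A)$, and pointwise Fourier inversion at $y=0$ requires continuity (or at least a Lebesgue-point property) of $F_x$ at $0$, which a generic $L^2$ representative does not have; nonnegativity of $Wf(x,\cdot)$ alone does not rescue the pointwise value, since there exist $L^1$ functions with nonnegative, non-integrable Fourier transform. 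Fubini--Tonelli is not the issue; the unproved inner identity is, and this is precisely the subtlety the paper flags about Lieb's formal argument failing outside the Feichtinger algebra. The correct substitute is the paper's Lemma \ref{lem:Fourier inversion}: since $\mathcal{U}\mathcal{A}f$ is continuous, belongs to $L^2(\widehat A\times A)$, and $\mathscr{F}\mathcal{U}\mathcal{A}f=Wf\geq 0$ by assumption, the lemma yields $\int_{A\times\widehat A}Wf\,dx\,d\xi=\mathcal{A}f(0,0)=\|f\|_{L^2}^2$, hence $\|Wf\|_{L^1}=1$ by nonnegativity. With step (a) replaced by this argument (which requires the approximate-identity proof of the lemma, not a bare assertion), the rest of your proof goes through and gives the contradiction $1\leq \lambda_2<1$.
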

Section \ref{sec proofsmainresults} is devoted to the proof of these and some related results.  The proofs are based on a recent characterization \cite{nicola2023maximally} of the extremal functions for the generalized Wehrl entropy bound (for the concave function $\Phi(\tau)=\sqrt{\tau}$, $\tau\in [0,1]$; see Section \ref{sec caratteri} for terminology). Indeed,
in \cite{lieb} H. Lieb sketched an argument that shows that the problems of positivity of the Wigner distribution in $\R^d$ and the Wehrl entropy bounds are related \textit{at least at the formal level} (however, formula \cite[(1,26)]{lieb} does not hold in the whole space $L^2$, but only for functions in the so-called Feichtinger algebra --- see \cite{degosson_wigner} --- which prevents one from concluding in the desired generality).  We will see how to rigorously implement Lieb's idea following a technically different pattern.

In Section \ref{sec products} we specialize Theorem \ref{thm mainteo 1} to the case of groups that are (possibly infinite) sums of discrete Abelian groups or (possibly infinite) products of compact Abelian groups. Precisely, let $\{A_j:\ j\in I\}$ be a family of $2$-regular discrete groups, where $I$ is a nonempty set. Let $A=\bigoplus_{j\in I} A_j$. We will see that the functions that have the form in Theorem \ref{thm mainteo 1} (b), that is, the $S$-states of $A$, are precisely the functions of the type $\tilde{f}\otimes \delta$, where $\tilde{f}$ is an $S$-state of $\bigoplus_{j\in F} A_j$ for some finite subset $F\subset I$ and $\delta$ is the indicator function of $\{0\}$ in $ \bigoplus_{j\in I\setminus F} A_j$. 

This result has a natural interpretation in terms of infinite quantum spin systems, which represents one of the motivations for this note. Consider a composite system, whose subsystems are described by the Hilbert spaces $\ell^2(A_j)$, $j\in\mathbb{N}$, where $A_j$ is a finite Abelian group (that is, the classical configuration space of a qudit, or spin system). Then (for a suitable choice of the local ground states, see \cite{naaijkens} and Section \ref{sec quantumspin}), $\ell^2(A)$ can be interpreted as the Hilbert space of the composite system, and we conclude that $Wf\geq0$ everywhere if and only if $f$ is an $S$-state of a finite subsystem --- in particular, only local excitations occur. A similar result holds for the (possibly infinite) product of compact Abelian groups, with a similar interpretation for a different choice of local ground states; see Section \ref{sec quantumspin}. 

In view of the above-mentioned connections between the issue of positivity of the Wigner distribution and several deep problems in quantum mechanics, it is expected that the machinery developed in this note could similarly be applied to extend some of those results to infinite quantum spin systems --- we plan to address this question in a subsequent work.   

In Section \ref{sec examples} we will discuss some further examples, including $n$-adic groups (for an integer $n$ not necessarily prime) and solenoid groups.

\section{Notation and preliminaries}
\subsection{Notation}\label{sec notation}In the following, we will denote a locally compact Hausdorff Abelian (shortly, LCA) group with $A$. The dual group of $A$, that is, the LCA group of continuous homomorphisms $\xi \colon A \to U(1)$ (where $U(1) = \{z \in \C  \colon |z|=1 \}$ with the usual product), will be denoted as $\widehat{A}$ and the action of $\xi \in \widehat{A}$ on $x \in A$ as $\xi(x) \in \C$, $|\xi(x)|=1$. Both the group laws on $A$ and $\widehat{A}$ will be denoted additively.

The Haar measure on $A$ will be denoted as $dx$ and $|\Omega|$ denotes the measure of a (Borel) subset $\Omega\subset A$. 

Given a function $f \in L^1(A)$, its Fourier transform $\widehat{f} \colon A \to \C$ is defined as
\begin{equation*}
    \widehat{f}(\xi) = \int_{A} f(x) \overline{\xi(x)} \, dx, \quad \xi \in \widehat{A}.
\end{equation*}
The Fourier transform is then extended to $L^2(A)$, and on $\widehat{A}$ we choose the Haar measure $d\xi$ normalized in such a way that Plancherel's formula holds with constant 1, that is,
\begin{equation*}
    \int_A |f(x)|^2 \, dx = \int_{\widehat{A}} |\widehat{f}(\xi)|^2 \, d\xi.
\end{equation*}
We also make use of the pointwise Fourier inversion formula in the following form: if $f\in L^2(A)$, $\widehat{f}\in L^1(A)$ and $f$ is continuous, then, 
\[
f(x)=\int_{\widehat{A}} \widehat{f}(\xi)\xi(x)\, d\xi\qquad x\in A.
\]
This is a consequence of the Plancherel formula in the form of \cite[(31.18)]{hewitt70}. 

Given a subgroup $H\subset A$, we set 
\[
H^\bot=\{\xi\in\widehat{A}:\ \xi(x)=1\quad\forall x\in H\} \subset \widehat{A}
\]
for its orthogonal subgroup. If $H$ is compact and open in $A$ then $H^\bot$ is compact and open in $\widehat{A}$, $\widehat{\chi_H}=|H|\chi_{H^\bot}$ and $|H| |H^\bot|=1$; see, e.g., \cite[Example 4.4.9]{reiter00}. 

\subsection{On 2-regular LCA groups}
On an LCA group $A$, we consider the map $x \mapsto 2x \coloneqq x+x$. It is clear that it is a continuous group homomorphism $A \to A$. 
\begin{definition}\label{def 2regular} We say that an LCA group $A$ is \emph{2-regular} if the map $x\mapsto 2x$ is bijective and open, hence a topological isomorphism. In that case, we denote by $2^{-1}:A\to A$ its inverse.
\end{definition}
   We point out that if $A$ is 2-regular, then also $\widehat{A}$ is 2-regular, since $(2\xi)(x) = \xi(2x)$ and $(2^{-1}\xi)(x) = \xi(2^{-1}x).$ 

If $A$ is 2-regular, the map $\Omega\mapsto |2\Omega|$ ($\Omega$ Borel) is a Radon measure. Indeed, this is again a Haar measure, because $2(x+\Omega) = 2x+2\Omega$ and, therefore, there exists $\lambda_2 >0$ such that
\begin{equation}\label{eq:lambda_2}
    |2\Omega| = \lambda_2|\Omega| \quad \forall \, \Omega \subset A \quad \text{Borel}.
\end{equation}
We point out that the measure $\Omega\mapsto |2\Omega|$ is nothing more than the push-forward of the Haar measure $dx$ by $2^{-1}$. Therefore, given a measurable function $f \colon A \to \C$, summable or nonnegative, from the change-of-variable formula we have
\begin{equation}\label{eq:change of variable}
    \int_A f(2^{-1}x) \, dx = \lambda_2 \int_A f(y)  \, dy,
\end{equation}
and the same holds when integrating over $\widehat{A}$.
Indeed, the same argument applies to $\widehat{A}$ and, moreover, the constant $\lambda_2$ is the same. This can be seen by the Plancherel formula and the fact that 
\[
\widehat{f}(2^{-1}\cdot)=\lambda_2\widehat{f(2^{-1}\cdot)}.
\]
Suppose now that the 2-regular group $A$ contains a compact open subgroup $H$. Then $2 H \subset H$ and, moreover, $|H|<\infty$ because $H$ is compact and $|H|>0$ because $H$ is open. Therefore, it must be 
\[
0<\lambda_2 \leq 1.
\]
In the case $A$ contains a compact open subgroup, we can see that the doubling constant $\lambda_2$ is the same for $A$ and $\widehat{A}$ also by noticing that if $H$ is an open compact subgroup of $A$ then $H^{\perp}$ is also an open compact subgroup of $\widehat{A}$ and $(2H)^{\perp}=2^{-1}(H^{\perp})$.

\subsection{Phase-space analysis on LCA groups}\label{sec phasespace}

We recall some elementary facts about phase-space analysis in LCA groups. Let $A$ be an LCA group. Given $x \in A$, $\xi \in \widehat{A}$, we define the translation by $x$,
\begin{equation*}
    T_xf(y) = f(y-x), \quad y \in A,
\end{equation*}
and the modulation by $\xi$,
\begin{equation*}
    M_{\xi}f(y) = \xi(y)f(y), \quad y \in A,
\end{equation*}
where $f\in L^2(A)$. 

If $f,g \in L^2(A)$ we define the short-time Fourier transform \cite{grochenig1998aspects}, alias coherent state transform \cite{lieb_book}, of $f$ with window $g$, as
\begin{equation*}
    V_gf(x,\xi) = \langle M_{\xi}T_x g, f \rangle_{L^2} \quad (x,\xi) \in A \times \widehat{A},
\end{equation*}
where the scalar product $\langle \cdot,\cdot\rangle_{L^2}$ is considered linear in the second argument.
The function $V_gf$ is continuous and vanishes at infinity (see \cite[Proposition 2.1]{nicola2023maximally}). 

We have the Plancherel-type formula for $f_1,f_2,g_1,g_2\in L^2(A)$:
\[
\langle V_{g_1}f_1, V_{g_2}f_2\rangle_{L^2(A\times\widehat{A})}=\langle f_1,f_2 \rangle_{L^2(A)}\langle g_2,g_1 \rangle_{L^2(A)}.
\]
In particular, if $f,g\in L^2(A)$ and $\|g\|_{L^2(A)}=1$ we have 
\[
\|V_g f\|_{L^2(A\times\widehat{A})} =\|f\|_{L^2(A)}\qquad (\|g\|_{L^2(A)}=1),
\]
that is, $V_g:L^2(A)\to L^2(A)$ is a (not onto) isometry.  

We refer to \cite{grochenig1998aspects} for the proof of these results (see also \cite{grochenig_book} for the analogous results in $\R^d$ and \cite{feichtinger} for the case where $A$ is finite). 

Suppose now that $A$ is 2-regular. Then we can also define the \emph{Weyl--Heisenberg operators} $w(x,\xi):L^2(A)\to L^2(A)$:
\begin{equation*}
    w(x,\xi) = \overline{\xi(2^{-1}x)} M_{\xi}T_x.
\end{equation*}
These unitary operators are a symmetrized version of the phase-space shifts $M_\xi T_x$, and allow us to \textit{naturally} extend the definition of the Wigner distribution, given in \eqref{def wigner}, to any $2$-regular LCA group (we notice that a number of different proposals were also made for Wigner-type functions in LCA groups that are not 2-regular; see, e.g., \cite{kutyniok2003,mantoiu}). 
\begin{definition}\label{def wigner} Let $A$ be a $2$-regular LCA group. 
Given $f \in L^2(A)$, we define its \textrm{ambiguity function} as
\begin{equation}\label{eq:def_ambiguity}
    \A f(x,\xi) = \langle w(x,\xi)f,f\rangle_{L^2(A)} = \xi(2^{-1}x) V_ff(x,\xi) \quad (x,\xi) \in A \times \widehat{A}
\end{equation}
and its \textrm{Wigner distribution} as
\begin{equation}\label{eq:def_Wigner}
    Wf(x,\xi) =  \int_A f(x + 2^{-1}y) \overline{f(x-2^{-1}y)}\, \overline{\xi(y)} \, dy \quad (x,\xi) \in A \times \widehat{A}.
\end{equation}
\end{definition}
We observe that both the functions $\mathcal{A}f$ and $Wf$ are continuous.

From a direct computation, using the change-of-variable formula \eqref{eq:change of variable}, it is easy to see that
\begin{equation}\label{eq:relation Wigner STFT}
    Wf(x,\xi) = \lambda_2\, \xi(2x) V_{f^{\vee}}f(2x,2\xi) \quad (x,\xi) \in A \times \widehat{A},
\end{equation}
where $f^{\vee}$ is the reflection of $f$, that is, $f^{\vee}(x): = f(-x)$.

Also, denote by $\mathscr{F}$ the Fourier transform in $\widehat{A} \times A$, that is, 
\[
\mathscr{F}F(x,\xi)=\iint_{\widehat{A}\times A}  F(\eta,y)\overline{\xi(y)}\overline{\eta(x)}\, d\eta\,dy.
\]
Then the ambiguity and the Wigner transforms of $f$ are related by 
\begin{equation}\label{eq:relation Wigner and Ambiguity}
    Wf = \mathscr{F}\mathcal{U}\mathcal{A}f,
\end{equation}
where $\mathcal{U}$ is the rotation of a function $F$ on $A \times \widehat{A}$, that is, 
\begin{equation}\label{def U}
\mathcal{U}F(\xi,x) = F(x,-\xi);
\end{equation}
see \cite[Lemma 4.3.4]{grochenig_book} for the proof in $\R^d$, which extends easily to the LCA setting.

Following the same argument as in $\R^d$ (see \cite[Proposition 4.3.2]{grochenig_book}) one can prove the covariance property
\begin{equation}\label{eq:covariant property Wigner}
    W(T_yM_{\eta}f)(x,\xi) = W(x-y,\xi-\eta),\qquad x,y\in A,\ \xi,\eta\in\widehat{A},
\end{equation}
and Moyal's formula
\begin{equation}\label{eq:Moyal formula}
    \langle Wf,Wg\rangle_{L^2(A \times \widehat{A})}=|\langle f,g\rangle_{L^2(A)}|^2\qquad f,g\in L^2(A).
\end{equation}
Also, for every $f \in L^2(A)$ it holds
    \begin{equation}\label{eq wigner fourier}
        Wf(x,\xi) = W\widehat{f}(\xi,-x) \quad (x,\xi) \in A \times \widehat{A}.
    \end{equation}
    Again, this can be proved following the same pattern as in $\R^d$ (see \cite[Proposition 4.3.2]{grochenig_book}).

\subsection{Subcharacters of second degree, \texorpdfstring{$S$}{S}-states and Wehrl entropy}\label{sec caratteri}
In this section we recall some results from \cite{igusa68,nicola2023maximally,weil64} which will be useful in the following. 

Let $A$ be an LCA group. The standard symplectic structure of $A \times \widehat{A}$ is defined by the bicharacter $\sigma \colon (A \times \widehat{A}) \times (A \times \widehat{A}) \to U(1)$ given by
\begin{equation*}
    \sigma((x,\xi),(y,\eta)) = \xi(y) \overline{\eta(x)}, \quad (x,\xi),\,(y,\eta) \in A\times\widehat{A}.
\end{equation*}
A subgroup $G \subset A \times \widehat{A}$ is called \emph{isotropic} if $\sigma(z,w)=1$ for every $z,w \in G$. If $G$ is a compact open isotropic subgroup, then $|G| \leq 1$ (\cite[Proposition 3.5]{nicola2023maximally}). If $|G|=1$, then $G$ will be called \emph{maximal compact open isotropic subgroup}. It follows from \cite[Proposition 3.1, 3.5 and 3.6]{nicola2023maximally} that every such subgroup $G$ has the form 
\[
G=\{(x,\xi)\in A\times\widehat{A}: x\in H,\  \xi|_H= \beta(x)\} 
\]
for some (uniquely determined) compact open subgroup $H\subset A$ and continuous symmetric homomorphism $\beta:H\to \widehat{H}$.

We recall that, a continuous homomorphism $\beta \colon A \to \widehat{A}$ is called \emph{symmetric} if $\beta(x)(y) = \beta(y)(x)$ for every $x,y \in A$. We denote by $\mathrm{Sym}(A)$ the set of such homomorphisms, which is an Abelian group with the operation 
\[
(\beta + \beta')(x)(y) \coloneqq \beta(x)(y)\beta'(x)(y),
\qquad \beta,\beta'\in \mathrm{Sym}(A).
\]
We now recall from \cite{weil64} (see also \cite{reiter89}) the definition of character of second degree. 

\begin{definition}\label{def caratsecond}
     A \emph{character of second degree} associated with $\beta \in \mathrm{Sym}(A)$ is a continuous function $h \colon A \to U(1)$ such that
\begin{equation*}
    h(x+y) = h(x)h(y)\beta(x)(y), \quad \forall \, x,y \in A.
\end{equation*}
\end{definition}
We denote by $\mathrm{Ch}_2(A)$ the set of characters of the second degree, which is an Abelian group with the operation 
\[
(h+h')(x) \coloneqq h(x)h'(y),\qquad h,h'\in \mathrm{Ch}_2(A).
\]
Moreover, from \cite[Lemma 6]{igusa68} and \cite[page 146]{weil64} we have the following short exact sequence of Abelian groups:
\begin{equation}\label{eq:exact sequence}
    0 \to \widehat{A} \to \mathrm{Ch}_2(A) \to \mathrm{Sym}(A) \to 0.
\end{equation}
In particular, this implies that for every $\beta \in \mathrm{Sym}(A)$ there exists $h \in \mathrm{Ch}_2(A)$ associated with $\beta$ and that two $h,h' \in \mathrm{Sym}(A)$ associated with the same $\beta \in \mathrm{Sym}(A)$ differ for a character, i.e., $h=M_{\xi}h'$ for some $\xi \in \widehat{A}$. 

\begin{remark}\label{rem rem1}
    If $A$ is 2-regular, the function $A\ni x\mapsto \beta(2^{-1}x)(x)$ is, in fact, a character of second degree associated with $\beta\in{\rm Sym}(A)$, and therefore in that case all the characters of second degree associated with $\beta$ have the form $x\mapsto \beta(2^{-1}x)(x)\xi(x)$ for some $\xi\in\widehat{A}$. 
\end{remark} 
We also need the following definition from \cite[Definition 4.3]{nicola2023maximally}.
\begin{definition}\label{def subcharacter}
    Given a compact open subgroup $H \subset A$ and a continuous symmetric homomorphism $\beta \in \mathrm{Sym}(H)$, we say that a continuous function $\phi \colon A \to \C$ is a \emph{subcharacter of second degree} associated with $(H,\beta)$ if $\phi(x)=0$ for every $x \notin H$ and $\phi\vert_H$ is a character of second degree associated with $\beta$ (Definition \ref{def caratsecond}).
    \end{definition}
     This definition was inspired by that of \emph{subcharacter}, that is, a continuous function $\phi \colon A \to \C$ such that for some compact open subgroup $H \subset G$ we have $\phi(x) = 0$ for $x \notin H$ and $\phi\vert_H$ is a character of $H$ (see \cite[Definition 43.3]{hewitt70}). 

     We now introduce the notion of the $S$-state in $A$, that is, a function that has the form in Theorem \ref{thm mainteo 1} (b).

    \begin{definition}\label{def Sstate}
An \emph{$S$-state} in $A$ is a continuous function $f:A\to\C$ of the form 
\[
f(x)=c\,\phi(x-y)\qquad x\in A
\]
for some subcharacter of second degree $\phi$ of $A$, some $y\in A$ and $c\in \C\setminus\{0\}$. 
    \end{definition}
    \begin{remark}\label{rem rem2}
By definition, the $S$ states are precisely the functions that appear in Theorem \ref{thm mainteo 1} (b). However, under the assumption of this theorem, namely that $A$ is 2-regular and the doubling map $x\mapsto 2x$ is measure-preserving, the $S$-states can be described more explicitly. Indeed, if $H\subset A$ is a compact open subgroup, we have $2H\subset H$, and $|2H|=|H|$, hence $2H=H$, that is, $H$ is 2-regular. Hence, as a consequence of Remark \ref{rem rem1}, a subcharacter of second degree $\phi$ of $A$, with support $H$, necessarily has the form
\[
\phi|_H(x)=\beta(2^{-1}x)(x)\xi(x)\qquad x\in H,
\]
for some $\beta\in{\rm Sym}(H)$ and $\xi\in \widehat{H}$.

We also observe that regardless of whether $A$ is 2-regular or not, by \cite[Theorem 1.4]{nicJMPA} the $S$-states coincide with the so-called stabilizer states, as defined in \cite[Definition 1.4]{nicJMPA}.
    \end{remark}
We can finally recall some of the results from \cite{nicola2023maximally} that will be used later. The first result is \cite[Proposition 4.4]{nicola2023maximally}, which gives information on the transform $V_\phi \phi$ of a subcharacter of second degree $\phi$.

\begin{theorem}\label{thm thm26}
Let $\phi$ be a subcharacter of second degree of $A$ associated with some pair $(H,\beta)$, where $H\subset A$ is a compact open subgroup and $\beta\in{\rm Sym}(H)$ (Definition \ref{def subcharacter}).  Then $|H|^{-1}V_\phi \phi(x,\xi)$ is a subcharacter of second degree of $A\times \widehat{A}$ associated with the pair $(G,\beta')$, where $G$ is the maximal compact open isotropic subgroup
\[
G=\{(x,\xi)\in A\times\widehat{A}:\ x\in H,\  \xi|_H=\beta(x)\}
\]
 and the symmetric homomorphism $\beta'\in{\rm Sym}(G)$ is given by  
    \[
    \beta'(y,\eta)(x,\xi)=\overline{\eta(x)}\qquad (x,\xi),(y,\eta)\in G.
    \]
\end{theorem}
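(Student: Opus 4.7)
The plan is to compute $V_\phi\phi(x,\xi)$ explicitly and then read off both the support and the character-of-second-degree structure on $G$. Writing out the definition,
\[
V_\phi\phi(x,\xi)=\int_A \overline{\xi(y)}\,\overline{\phi(y-x)}\,\phi(y)\,dy,
\]
the integrand is supported where both $y\in H$ and $y-x\in H$, which forces $x\in H-H=H$; thus $V_\phi\phi(x,\xi)=0$ whenever $x\notin H$. This already gives the correct support condition with respect to the first coordinate.

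Next, for $x\in H$, I would use that $\phi\vert_H$ is unimodular together with the defining identity $\phi(y)=\phi(y-x)\phi(x)\beta(y-x)(x)$ to obtain $\phi(y)\overline{\phi(y-x)}=\phi(x)\beta(y-x)(x)$. Using that $\beta$ is a homomorphism and is symmetric, one has $\beta(y-x)(x)=\overline{\beta(x)(x)}\,\beta(x)(y)$, so the integral reduces to
\[
V_\phi\phi(x,\xi)=\phi(x)\overline{\beta(x)(x)}\int_H \overline{\xi(y)}\,\beta(x)(y)\,dy.
\]
Since $\beta(x)\in \widehat{H}$, the character orthogonality relations on $H$ give that the last integral equals $|H|$ when $\xi\vert_H=\beta(x)$ and $0$ otherwise. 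Combined with the first step, this shows $|H|^{-1}V_\phi\phi$ is supported exactly on $G$ and on $G$ equals $\psi(x,\xi):=\phi(x)\overline{\beta(x)(x)}$.

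It then remains to verify that $\psi$ is a character of second degree on $G$ associated with $\beta'$. I would compute $\psi(x+y,\xi+\eta)$ directly: expanding $\phi(x+y)=\phi(x)\phi(y)\beta(x)(y)$ and using bilinearity to expand $\beta(x+y)(x+y)=\beta(x)(x)\beta(y)(y)\beta(x)(y)^2$ (again via symmetry), the square $\beta(x)(y)^2$ cancels one factor of $\beta(x)(y)$ and leaves a single $\overline{\beta(x)(y)}$. Hence $\psi(x+y,\xi+\eta)=\psi(x,\xi)\psi(y,\eta)\overline{\beta(x)(y)}$. Since $(x,\xi),(y,\eta)\in G$, the relation $\xi\vert_H=\beta(x)$ forces $\beta(x)(y)=\xi(y)$, so the cocycle factor is $\overline{\xi(y)}$, which matches the proposed $\beta'((x,\xi))(y,\eta)$. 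Symmetry of $\beta'$ on $G$ is automatic because for pairs in $G$ one has $\xi(y)=\beta(x)(y)=\beta(y)(x)=\eta(x)$.

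The main obstacle is essentially bookkeeping: one must track carefully how the symmetry of $\beta$ and the constraint defining $G$ conspire to turn the naturally-appearing factor $\overline{\beta(x)(y)}$ into $\overline{\xi(y)}$, which is what makes $\beta'$ depend only on the phase-space coordinates and not on $(H,\beta)$. No deeper analytic ingredient is required beyond continuity of $\psi$, which is inherited from continuity of $\phi$ and $\beta$, and the fact that $G$ is a maximal compact open isotropic subgroup, which is already supplied by the classification of such subgroups recalled above.
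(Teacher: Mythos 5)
Your computation is correct and self-contained. Note that the paper does not actually prove this statement: it is quoted verbatim from \cite[Proposition 4.4]{nicola2023maximally}, so there is no internal proof to compare against; your argument is exactly the natural direct verification one would expect there. All the steps check out: the support analysis ($x\in H$ forced by the translation, and then the character orthogonality $\int_H \overline{\xi(y)}\beta(x)(y)\,dy=|H|\,$ iff $\xi|_H=\beta(x)$) identifies the support as $G$, the explicit value $|H|^{-1}V_\phi\phi(x,\xi)=\phi(x)\overline{\beta(x)(x)}$ on $G$ is right, and the verification of the second-degree identity with cocycle $\overline{\beta(x)(y)}=\overline{\xi(y)}=\overline{\eta(x)}$ (using the defining constraint of $G$ and isotropy) matches the stated $\beta'$. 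Two points you use implicitly and could state explicitly for completeness: that $(x+y,\xi+\eta)\in G$ when $(x,\xi),(y,\eta)\in G$ (i.e., $G$ is a subgroup, since $(\xi+\eta)|_H=\beta(x+y)$), which is needed before expanding $\psi(x+y,\xi+\eta)=\phi(x+y)\overline{\beta(x+y)(x+y)}$; and that the restriction of $|H|^{-1}V_\phi\phi$ to $G$ is $U(1)$-valued and continuous (immediate from $|\phi|=1$ on $H$ and the continuity of $V_\phi\phi$), as required by Definition \ref{def subcharacter}. Neither is a gap in substance.
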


For the benefit of the reader, we also recall the following result from \cite[Theorem 5.2]{nicola2023maximally}.
\begin{theorem}\label{th:equivalence for STFT}
    Let $f,g \in L^2(A)$ be normalized. The following facts are equivalent:
    \begin{itemize}
        \item $|\{z \in A \times \widehat{A} \, \colon \, V_gf(z) \neq 0\}|=1$;
        \item $|V_gf|=\chi_S$, where $S \subset A \times \widehat{A}$ is a coset of a maximal compact open isotropic subgroup of $A \times \widehat{A}$;
        \item $f$ and $g$ are $S$-states and $f = cM_{\xi}T_xg$ for some $(x,\xi) \in A \times \widehat{A}$ and $c \in U(1)$.
    \end{itemize}
\end{theorem}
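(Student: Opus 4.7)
The plan is to prove the cycle (2)$\Rightarrow$(1)$\Rightarrow$(2) and separately (3)$\Leftrightarrow$(2), relying on three basic ingredients: the pointwise Cauchy--Schwarz bound $|V_g f(z)|\leq\|f\|_{L^2}\|g\|_{L^2}=1$, the isometry property $\|V_gf\|_{L^2(A\times\widehat{A})}=1$ from the STFT Plancherel formula, and the fact recalled in Section \ref{sec caratteri} that a maximal compact open isotropic subgroup $G\subset A\times\widehat{A}$ has measure $|G|=1$.

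The implication (2)$\Rightarrow$(1) is immediate because any coset $S$ of $G$ has measure $1$ and $V_gf$ vanishes off $S$. For (1)$\Rightarrow$``$|V_gf|$ is a $(0,1)$-characteristic function'', set $E=\{z:V_gf(z)\neq 0\}$; then
\[
1=\int_{E}|V_gf|^2\,dz\leq |E|\sup|V_gf|^2\leq 1,
\]
so equality holds everywhere and $|V_gf|=1$ on $E$, with $E$ clopen by continuity of $V_gf$. For (3)$\Rightarrow$(2), the covariance property of the STFT (analogue of \eqref{eq:covariant property Wigner}) reduces matters to computing $V_gg$ when $g=c_0\phi$ is a normalized $S$-state associated with $(H,\beta)$; Theorem \ref{thm thm26} gives $|H|^{-1}V_\phi\phi$ as a subcharacter of second degree supported on the maximal compact open isotropic subgroup
\[
G=\{(x,\xi)\in A\times\widehat{A}:\ x\in H,\ \xi|_H=\beta(x)\},
\]
so $|V_\phi\phi|=|H|\chi_G$, and normalization $\|g\|_{L^2}^2=|c_0|^2|H|=1$ yields $|V_gg|=\chi_G$. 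Applying the covariance identity once more gives $|V_gf|=\chi_S$ for a coset $S$ of $G$.

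The main obstacle is (2)$\Rightarrow$(3); this step also completes (1)$\Rightarrow$(2) by identifying the support $E$ from the previous paragraph with a coset of some maximal compact open isotropic subgroup. Given $|V_gf|=\chi_S$ with $S=z_0+G$, I would first use covariance to translate to $z_0=0$, so that $V_gf$ is unimodular on $G$ and zero elsewhere. The key step is then to argue that the unimodular phase $V_gf/|V_gf|$ on $G$ is itself a character of second degree: one computes, for $z,w\in G$, the quantity $V_gf(z+w)\overline{V_gf(z)V_gf(w)}$ using the twisted convolution identity for products of STFTs (a Plancherel-type manipulation giving $V_gf(z+w)=\int V_gf(z+u)\overline{V_gg(u-w)}\sigma(\cdot)\,du$ up to the symplectic cocycle $\sigma$), and exploits the fact that $G$ is maximally isotropic to show that the cocycle on $G$ reduces exactly to the bicharacter $\beta'$ of Theorem \ref{thm thm26}. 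This identifies $V_gf|_G$ with a subcharacter of second degree on $G$. Running Theorem \ref{thm thm26} in reverse, together with the injectivity (up to scalar) of the map $\phi\mapsto V_\phi\phi$ for subcharacters of second degree, forces $g$ to be an $S$-state; the saturation $|V_gf(z_0)|=\|f\|_{L^2}\|g\|_{L^2}$ at any point $z_0\in S$ then yields $f=cM_\xi T_xg$ with $(x,\xi)\in S$ and $|c|=1$ via the Cauchy--Schwarz equality case. The delicate point is the cocycle computation in the middle step, where one must carefully track how the Weyl--Heisenberg twist interacts with the isotropic structure of $G$.
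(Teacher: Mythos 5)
First, a point of reference: the paper itself does not prove Theorem \ref{th:equivalence for STFT}; it is recalled from \cite[Theorem 5.2]{nicola2023maximally}, so your proposal can only be compared with that reference's argument. The routine parts of your plan are fine: (2)$\Rightarrow$(1) is immediate, the normalization argument showing $|V_gf|=\chi_E$ with $E$ clopen of measure one is correct, and (3)$\Rightarrow$(2) via Theorem \ref{thm thm26}, covariance of the STFT and $|c_0|^2|H|=1$ is sound. The genuine gap is that the structural heart of the theorem is never established: one must show that the clopen measure-one support $E$ is a coset of a maximal compact open isotropic subgroup, and that $g$ is an $S$-state. You defer both to the step (2)$\Rightarrow$(3), but the argument you sketch there presupposes exactly what is needed --- the cocycle computation is carried out on a set already assumed to be a maximal compact open isotropic subgroup $G$, using its group law and its isotropy --- so it cannot also ``complete (1)$\Rightarrow$(2)''. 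As written, the cycle is circular at the only nontrivial point. Moreover, even granting (2), the sketch does not close: the reproducing/twisted-convolution identity you invoke expresses $V_gf$ through $V_gg$, which is unknown at that stage, so it does not deliver the asserted second-degree-character identity for the phase of $V_gf$ on $G$; and the appeal to ``injectivity (up to scalar) of $\phi\mapsto V_\phi\phi$'' is neither something you have proved nor quite the right statement --- what is needed is that $V_gf$ determines the pair $(f,g)$ up to reciprocal scalars (orthogonality relations), together with a phase-matching argument via the exact sequence \eqref{eq:exact sequence}, in the spirit of the proof of Proposition \ref{prop:Wigner of S-states}.

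The efficient way to fill the hole is to use your Cauchy--Schwarz observation at the start rather than at the end. Since $|V_gf(z)|=\|f\|_{L^2}\|g\|_{L^2}$ at every $z=(x,\xi)\in E$, the equality case of Cauchy--Schwarz gives $f=c(z)M_{\xi}T_xg$ with $|c(z)|=1$ for \emph{every} $z\in E$. Hence for each $d\in E-E$ the corresponding phase-space shift maps $g$ to a unimodular multiple of itself; the set $D$ of all such $d$ equals $\{|V_gg|=1\}$, is an open subgroup (it contains the open set $E-E$), is compact because $V_gg$ vanishes at infinity, and is isotropic because operators sharing the eigenvector $g$ must commute, which forces $\sigma(d,d')=1$ on $D$; thus $|D|\leq 1$, while $E\subset z_0+D$ and $|E|=1$ give $|D|=1$ (maximality) and $E=z_0+D$. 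This is what proves (1)$\Rightarrow$(2), and the eigenvalue relations $M_{\xi}T_xg=\lambda(x,\xi)g$ for $(x,\xi)\in D$ are then the input from which one shows that $g$ (hence $f$) is an $S$-state: they force $|g|$ to be constant on a coset of the compact open subgroup $H$ underlying $D$, supported there, with phase a character of second degree. None of this structural analysis appears in your proposal, and without it the equivalences are not proved.
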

Finally, the following estimate was proved in \cite[Theorem 6.3.1]{grochenig1998aspects}, while the characterization of the extremizers was obtained in \cite[Theorem 7.2]{nicola2023maximally} (see also \cite{lieb} for the case $A=\R$).
\begin{theorem}\label{th:Wehrl entropy}
    For every $f,g \in L^2(A)$ we have
    \begin{equation*}
        \|V_gf\|_{L^1(A\times\widehat{A})} \geq \|f\|_{L^2(A)}\|g\|_{L^2(A)}.
    \end{equation*}
    If $f,g \neq 0$, equality is achieved if and only if both are $S$-states and $f=cM_{\xi}T_xg$ for some $x \in A$, $\xi \in \widehat{A}$ and $c \in \C \setminus \{0\}$.
\end{theorem}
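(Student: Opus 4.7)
The plan is to establish the inequality via a Hölder-style interpolation between the $L^{\infty}$ and $L^{2}$ bounds on the short-time Fourier transform, both recorded in Section \ref{sec phasespace}. By Cauchy--Schwarz one has $\|V_gf\|_{L^{\infty}(A\times\widehat{A})}\leq \|f\|_{L^2(A)}\|g\|_{L^2(A)}$, while the Plancherel-type orthogonality relation gives $\|V_gf\|_{L^2(A\times\widehat{A})}=\|f\|_{L^2(A)}\|g\|_{L^2(A)}$. Integrating the pointwise inequality $|V_gf|^2\leq \|V_gf\|_{L^{\infty}}\,|V_gf|$ then yields
\[
\|f\|_{L^2}^2\|g\|_{L^2}^2=\|V_gf\|_{L^2}^2\leq \|V_gf\|_{L^{\infty}}\,\|V_gf\|_{L^1}\leq \|f\|_{L^2}\|g\|_{L^2}\,\|V_gf\|_{L^1},
\]
which is exactly the asserted lower bound.

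For the equality case with $f,g\neq 0$, I would determine what the simultaneous saturation of the two inequalities in the display above forces. The second one forces $\|V_gf\|_{L^{\infty}}=\|f\|_{L^2}\|g\|_{L^2}$, while the first forces $|V_gf(z)|\in\{0,\|V_gf\|_{L^{\infty}}\}$ for almost every $z\in A\times\widehat{A}$. Combined with the Plancherel identity, this shows that the support $S=\{z\colon V_gf(z)\neq 0\}$ has measure exactly $1$ and $|V_gf|=\|f\|_{L^2}\|g\|_{L^2}\,\chi_S$. After normalizing $f$ and $g$ to unit $L^2$-norm, we are precisely in the situation covered by Theorem \ref{th:equivalence for STFT}, which delivers that $f$ and $g$ are $S$-states with $f=c\,M_{\xi}T_xg$ for some $(x,\xi)\in A\times\widehat{A}$ and $c\in U(1)$. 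Absorbing the normalization constants into a new $c\in\C\setminus\{0\}$ recovers the claim.

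For the converse direction, suppose $g$ is an $S$-state associated with a pair $(H,\beta)$ and $f=c\,M_{\xi}T_xg$. By the covariance of the STFT under phase-space shifts, $|V_gf|(y,\eta)=|c|\,|V_gg|(y-x,\eta-\xi)$, so $\|V_gf\|_{L^1}=|c|\,\|V_gg\|_{L^1}$. Theorem \ref{thm thm26} applied to $\phi=g$ gives $|V_gg|=|H|\,\chi_G$, where $G\subset A\times\widehat{A}$ is the associated maximal compact open isotropic subgroup, hence $|G|=1$. Since $\|g\|_{L^2}^{2}=|H|$, this yields $\|V_gf\|_{L^1}=|c|\,|H|=\|f\|_{L^2}\|g\|_{L^2}$, confirming equality.

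The inequality itself is soft and routine; the real difficulty lies in the identification of the extremizers. Extracting from the single datum $|V_gf|=\|f\|_{L^2}\|g\|_{L^2}\,\chi_S$ with $|S|=1$ the structural conclusion that $f$ and $g$ are $S$-states is a rigidity statement that, as outlined, we offload entirely to Theorem \ref{th:equivalence for STFT}; any self-contained treatment would have to reprove that classification, which hinges on a detailed analysis of the zero set of a non-vanishing STFT on an LCA group and on the structure theory of maximal compact open isotropic subgroups of $A\times\widehat{A}$.
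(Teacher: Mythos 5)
Your argument is correct, but note that the paper itself offers no proof of Theorem \ref{th:Wehrl entropy}: it imports the inequality from \cite[Theorem 6.3.1]{grochenig1998aspects} and the characterization of the extremizers from \cite[Theorem 7.2]{nicola2023maximally}, so there is no internal proof to compare against; your write-up essentially reconstructs the cited arguments. The inequality via $\|V_gf\|_{L^2}^2\le\|V_gf\|_{L^\infty}\,\|V_gf\|_{L^1}$, the orthogonality relation $\|V_gf\|_{L^2}=\|f\|_{L^2}\|g\|_{L^2}$ and the Cauchy--Schwarz bound $\|V_gf\|_{L^\infty}\le\|f\|_{L^2}\|g\|_{L^2}$ is the standard route (Lieb/Gr\"ochenig), and your reduction of the equality case to Theorem \ref{th:equivalence for STFT} is legitimate and non-circular within this paper, since that theorem is quoted independently of the Wehrl bound; the passage from saturation of both inequalities to $|V_gf|=\|f\|_{L^2}\|g\|_{L^2}\chi_S$ with $|S|=1$, and then to the normalized situation of Theorem \ref{th:equivalence for STFT}, is sound (scalar multiples of $S$-states are $S$-states, so rescaling back is harmless). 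Two small points in your converse deserve tidying: an $S$-state $g$ has the form $c_0T_y\phi$ with $\phi$ a subcharacter of second degree, so Theorem \ref{thm thm26} applies to $\phi$ rather than to $g$ itself; since $|V_{c_0T_y\phi}(c_0T_y\phi)|=|c_0|^2\,|V_\phi\phi|$ and both sides of the claimed equality are homogeneous in $g$, your implicit normalization $\|g\|_{L^2}^2=|H|$ (that is, $|c_0|=1$, $y$ arbitrary) loses nothing, but it should be stated. Alternatively, the converse follows at once from the second bullet of Theorem \ref{th:equivalence for STFT} applied to the normalized pair, which gives $|V_gf|=\chi_S$ with $|S|=1$ and hence $\|V_gf\|_{L^1}=\|f\|_{L^2}\|g\|_{L^2}$ directly.
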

The above estimate can be regarded as an instance of a Wehrl entropy bound, that is, the problem of minimizing the generalized entropy 
\[
\iint_{A\times\widehat{A}}\Phi(|V_g f|^2)\, dx\, d\xi
\]
over normalized $f,g\in L^2(A)$, where $\Phi \colon[0,1]\to\R$ is a concave function. Indeed, $\Phi(\tau)=\sqrt{\tau}$ in Theorem \ref{th:Wehrl entropy}. The sharp lower bound (in LCA groups containing a compact open subgroup), for a general concave function $\Phi$, was recently obtained in \cite{nicJMPA}, but in the following we will only use the special case of  Theorem \ref{th:Wehrl entropy}; see \cite{lieb1978,lieb} for context when $A=\R^d$; see also \cite{NRT2024} and the references therein for recent developments.

\section{Main results and proof}\label{sec proofsmainresults}
Before diving into the proof of our main theorem, we need the following lemma, which can be seen as a refined version of the Fourier inversion formula.
\begin{lemma}\label{lem:Fourier inversion}
    Let $A$ be an LCA group, $f \in L^2(A)$ continuous such that $\widehat{f} \geq 0$. Then,
    \begin{equation*}
        \int_{\widehat{A}} \widehat{f}(\xi) \, d\xi = f(0).
    \end{equation*}
\end{lemma}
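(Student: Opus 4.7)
The plan is to reduce to $\widehat{f}\in L^1(\widehat A)$: once integrability is established, the pointwise Fourier inversion formula of Section~\ref{sec notation}, evaluated at $x=0$, immediately gives the claim. Two ingredients suffice: a preliminary step that uses $\widehat f\ge 0$ to prove $\|f\|_\infty\le f(0)$, and a main step that bounds $\int_{\widehat A}\widehat f\,d\xi$ by $f(0)$ via nonnegative cutoffs and Fatou.

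For the preliminary step, I would show that $f$ is positive definite. For $g\in L^1(A)\cap L^2(A)$, the autocorrelation $H(u):=\int_A g(u+y)\overline{g(y)}\,dy$ lies in $L^1\cap L^\infty(A)\subset L^2(A)$ and has $\widehat H(\xi)=|\widehat g(\xi)|^2\ge 0$. Fubini and the polarised Plancherel identity applied to $f,H\in L^2(A)$ yield
\[
\iint_{A\times A}f(x-y)g(x)\overline{g(y)}\,dx\,dy \;=\; \int_A f(u)H(u)\,du \;=\; \int_{\widehat A}\widehat f(-\xi)\,|\widehat g(\xi)|^2\,d\xi \;\ge\; 0.
\]
Approximating a finitely supported measure $\sum_i a_i\delta_{x_i}$ by $\sum_i a_i|V|^{-1}\chi_{x_i+V}\in L^1\cap L^2(A)$, with $V$ a shrinking symmetric neighborhood of $0$, upgrades this to the classical condition $\sum_{i,j}a_i\overline{a_j}f(x_i-x_j)\ge 0$. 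Hence $f$ is continuous positive definite, so $\|f\|_\infty\le f(0)<\infty$.

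For the main step, I would introduce, for every compact symmetric neighborhood $U$ of $0\in\widehat A$, the cutoff $\phi_U:=|U|^{-1}\chi_U*\chi_U\in C_c(\widehat A)$ and the kernel $k_U:=|U|^{-1/2}\check\chi_U\in L^2(A)$. Then $0\le\phi_U\le 1$, $\phi_U(0)=1$, $k_U^2\in L^1(A)$ has mass $1$, and $\widehat{k_U^2}=\phi_U$ as continuous functions (by Fourier inversion). Polarised Plancherel then gives
\[
\int_{\widehat A}\phi_U(\xi)\,\widehat f(\xi)\,d\xi \;=\; \int_A f(x)\,k_U(x)^2\,dx.
\]
Letting $U$ range over a F\o lner net of compact symmetric neighborhoods of $0\in\widehat A$ (which exists since every LCA group is amenable), $\phi_U(\xi)=|U\cap(\xi+U)|/|U|\to 1$ pointwise on $\widehat A$, while the Fej\'er-type densities $k_U^2$ concentrate at $0\in A$; since $f$ is bounded and continuous, the standard approximate-identity argument forces $\int_A f\,k_U^2\,dx\to f(0)$. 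Fatou's lemma then yields
\[
\int_{\widehat A}\widehat f(\xi)\,d\xi \;\le\; \liminf_U\int_{\widehat A}\phi_U\,\widehat f\,d\xi \;=\; f(0)\;<\;\infty,
\]
so $\widehat f\in L^1(\widehat A)$ and we are done.

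The main obstacle is the positive-definiteness step: upgrading the integral inequality to the classical sum condition $\sum a_i\overline{a_j}f(x_i-x_j)\ge 0$ by a density/approximation argument requires some care, since $f$ is only assumed to be $L^2$, not $L^1$. Everything else---the Plancherel exchange, the existence of a F\o lner net in $\widehat A$, and the approximate-identity convergence for a bounded continuous function---is routine.
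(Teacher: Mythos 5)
Your overall strategy is the same reduction as the paper's --- prove $\widehat{f}\in L^1(\widehat{A})$ and then invoke the pointwise inversion formula of Section \ref{sec notation} at $x=0$ --- but you run the approximation on the dual side, with Fej\'er-type kernels $k_U^2$ built from F\o lner sets $U\subset\widehat{A}$. Since these kernels are not compactly supported, you are forced into the preliminary positive-definiteness step to get $\|f\|_\infty\le f(0)$. The paper's proof avoids all of this by placing the approximate identity on the $A$ side with \emph{shrinking compact supports}: with $\varphi_U=\psi_U\ast\psi_U$ supported in $U+U$, the convergence $\int_A f\varphi_U\to f(0)$ uses only continuity of $f$ at $0$ (no boundedness, no positive definiteness, no amenability input), and on the dual side one only needs $\widehat{\varphi_U}=|\widehat{\psi_U}|^2\to 1$ uniformly on compacta, followed by monotone convergence along a $\sigma$-compact exhaustion of $\{\widehat{f}\neq 0\}$. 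Your Plancherel identity $\int_{\widehat A}\phi_U\widehat f=\int_A f\,k_U^2$ is fine (note $k_U^2\in L^1\cap L^2$ because $\check\chi_U$ is bounded by $|U|$), and the upgrade from the integral inequality to classical positive definiteness is workable, if heavier than necessary.

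There are, however, two concrete soft spots. First, Fatou's lemma as you invoke it is not valid: in a general LCA group $\widehat A$ (not $\sigma$-compact or metrizable) a F\o lner family is genuinely a \emph{net}, and Fatou fails for nets (e.g.\ $\chi_F$, over finite $F\subset[0,1]$ ordered by inclusion, converges pointwise to $1$ while all integrals vanish). The fix is exactly the device used in the paper: restrict to a compact $K\subset\widehat A$, where the F\o lner property gives $\phi_U\to 1$ \emph{uniformly}, deduce $\int_K\widehat f\,d\xi\le f(0)$ for every compact $K$, and conclude by $\sigma$-compactness of $\{\widehat f\neq0\}$ plus monotone convergence along a sequence of compacts. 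Second, the assertion that the densities $k_U^2$ concentrate at $0\in A$ for an arbitrary F\o lner net is true but not routine, and it is the one step doing real analytic work on your side; it deserves a proof. One can get it with the same Plancherel trick you already use: for a compact symmetric neighborhood $W\subset A$ set $w=|W|^{-1}\chi_W\ast\chi_W$, so that $0\le w\le 1$, $\operatorname{supp}w\subset W+W$, $\widehat w=|W|^{-1}|\widehat{\chi_W}|^2\ge0$ with $\int_{\widehat A}\widehat w=1$; then
\begin{equation*}
\int_{W+W}k_U^2\,dx\;\ge\;\int_A k_U^2\,w\,dx\;=\;\int_{\widehat A}\phi_U\,\widehat w\,d\xi\;\longrightarrow\;1,
\end{equation*}
where the last limit again must use uniform convergence of $\phi_U$ on compacta (dominated convergence along a net is not available, for the same reason as above). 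With these repairs your argument closes; the paper's arrangement simply sidesteps both issues, as well as the positive-definiteness preliminary, by choosing the compactly supported approximate identity on $A$.
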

\begin{proof}
    For every $\varphi \in L^2(A)$, from the Plancherel formula we have
    \begin{equation*}
        \int_A f(x) \overline{\varphi(x)} \, dx = \int_{\widehat{A}} \widehat{f}(\xi) \overline{\widehat{\varphi}(\xi)} \, d\xi.
    \end{equation*}
    Consider a basis of compact neighborhoods $\{U\}$ of 0 and a corresponding approximate identity $\psi_U$ (see \cite[Proposition 2.42]{follandbook}). One can suppose that each $U$ is symmetric, that is, $x \in U \implies -x \in U$. Let $\varphi_U = \psi_U \ast \psi_U$. Then $\varphi_U$ is still an approximate identity associated with the basis of neighborhoods $\{U+U\}$, which means that $\varphi_U \geq 0$, $\int_A \varphi_U =1$, and $\varphi_U \in C(A)$ is supported in the compact subset $U+U$. Moreover, $\widehat{\varphi_U} = |\widehat{\psi_U}|^2 \geq 0$ and $\widehat{\varphi_U} \to 1$ uniformly in compact subsets of $\widehat{A}$ (see \cite[Lemma 4.46]{follandbook}). So, from the Plancherel formula with $\varphi=\varphi_U$ and recalling that both $\widehat{f}$ and $\widehat{\varphi_U}$ are nonnegative, for every compact $K \subset \widehat{A}$ we have
    \begin{equation*}
        \int_K \widehat{f}(\xi) \widehat{\varphi_U}(\xi) \, d\xi \leq \int_{\widehat{A}} \widehat{f}(\xi) \widehat{\varphi_U}(\xi) \, d\xi = \int_A f(x) \varphi_U(x) \, dx.
    \end{equation*}
    Letting $U \to \{0\}$ (in the sense of nets) we deduce that
    \begin{equation*}\label{eq:estimate L^1}
        \int_K \widehat{f}(\xi) \, d\xi \leq f(0).
    \end{equation*}
    Since $\widehat{f} \in L^2(\widehat{A})$, the set $\{\xi\in\widehat{A}:\ \widehat{f}(\xi) \neq 0\}$ is contained in a $\sigma$-compact subset of $\widehat{A}$ (see \cite[Page 44]{follandbook}) and so we have $\{\xi\in\widehat{A}:\ \widehat{f}(\xi) \neq 0\}\subset \cup_n K_n$ for some sequence $K_n \subset K_{n+1}$ of compact sets. Then, using monotone convergence in the previous estimate, with $K=K_n$, we conclude that $\widehat{f} \in L^1(A)$. We are now in a position to apply the usual Fourier inversion formula (since $f\in L^2(A)$ is continuous, and $\widehat{f}\in L^1(A)$; see Section \ref{sec notation}), which gives the desired result.
\end{proof}

We are now in a position to prove our main results. We start by proving Theorem \ref{thm mainteo 1}.
\begin{proof}[Proof of Theorem \ref{thm mainteo 1}] 
    We start with the implication $(b) \implies (a)$. First, we notice that if $f$ has the form in (b), then $f^{\vee} = e^{i\theta}M_{\xi_0} T_{x_0}f$ for some $(x_0,\xi_0) \in A \times \widehat{A}$ and some $\theta \in \R$ (recall $f^{\vee}(x):=f(-x))$. Indeed, let $f = cT_y\phi$ where $\phi$ is a subcharacter of second degree of $A$ whose support is some compact open subgroup $H\subset A$; hence $\phi|_H\in{\rm Ch}_2(H)$.   Then, $\phi^{\vee}|_H$ is still a character of second degree of $H$, and therefore $\phi^{\vee} = M_{\xi_0}\phi$ for some $\xi_0 \in \widehat{A}$ (because every continuous character of $H$ extends to a continuous character of $A$), which implies $f^{\vee} = \xi_0(y) M_{\xi_0}T_{-2y} f$.
    
    Hence, assuming (without loss of generality) $\|f\|_{L^2}=1$, from Theorem \ref{th:equivalence for STFT} we see that $|V_{f^{\vee}}f| = \chi_S$, where $S$ is the coset of a maximal compact open isotropic subgroup of $A \times \widehat{A}$, which implies, in particular, that $|S|=1$. Hence, using the relation \eqref{eq:relation Wigner STFT} between the coherent-state transform and the Wigner distribution, the change-of-variable formula \eqref{eq:change of variable}, and the fact that $\lambda_2=1$ in \eqref{eq:change of variable} by assumption,  we have
    \begin{align*}
        \int_{A \times \widehat{A}} |Wf(x,\xi)| \, dxd\xi&= \int_{A \times \widehat{A}}|V_{f^{\vee}}f(2x,2\xi)| \, dxd\xi\\
        &= \int_{A \times \widehat{A}} |V_{f^{\vee}}f(x,\xi)| \, dx d\xi = |S| = 1.
    \end{align*}
    In particular, $Wf \in L^1(A \times \widehat{A})$. Moreover, recalling the relation \eqref{eq:relation Wigner and Ambiguity} between the Wigner transform and the ambiguity function of $f$, since $\A f \in L^2(A \times \widehat{A})$ and $\A f$ is continuous, we can apply the Fourier inversion formula (see Section \ref{sec notation}) to obtain
    \begin{equation*}
        \int_{A \times \widehat{A}} Wf(x,\xi) \, dx d\xi = \mathcal{U}\A f(0,0) = 1.
    \end{equation*}
    This fact, combined with the previous result, implies $Wf \geq 0$.

    We now move to the proof of $(a) \implies (b)$. Since $\mathcal{U}\A f \in L^2(\widehat{A} \times A)$ is continuous and $Wf = \mathscr{F}\mathcal{U}\A f \geq 0$, from Lemma \ref{lem:Fourier inversion} we have
    \begin{equation*}
        \int_{A \times \widehat{A}} Wf(x,\xi) \, dx d\xi = \mathcal{U}\A f(0,0) = \|f\|_{L^2(A)}^2.
    \end{equation*}
    On the other hand, using again the change-of-variable formula \eqref{eq:change of variable} ($\lambda_2=1$), and the relation \eqref{eq:relation Wigner STFT}, we also have
    \begin{align*}
        \int_{A \times \widehat{A}} Wf(x,\xi) \, dx d\xi &= \int_{A \times \widehat{A}} |Wf(x,\xi)| \, dx d\xi \\
        &= \int_{A \times \widehat{A}} |V_{f^{\vee}}f(x,\xi)| \, dx d\xi \geq \|f\|_{L^2(A)}^2,
    \end{align*}
    where in the last step we used Theorem \ref{th:Wehrl entropy}. Hence, the last inequality is an equality, and again from Theorem \ref{th:Wehrl entropy} we then conclude that $f$ (and $f^{\vee})$ has the form in (b).
\end{proof}

This simple corollary is the result from Theorem \ref{thm mainteo 1} specialized for compact connected groups.
\begin{corollary}\label{cor:A connected compact}
   Let $A$ be a 2-regular LCA group. Assume that $A$ is compact and connected. Then, the following facts are equivalent for $f \in L^2(A)$.
    \begin{itemize}
        \item[(a)] $Wf(x,\xi)\geq 0$ for every $(x,\xi)\in A\times\widehat{A}$;
        \item [(b)] $f=c\, \xi$ for some $\xi \in \widehat{A}$ and some $c \in \C$.
    \end{itemize}
\end{corollary}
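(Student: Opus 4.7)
The plan is to deduce this corollary directly from Theorem \ref{thm mainteo 1}, after verifying two structural consequences of compactness and connectedness.

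First I would check the hypotheses of Theorem \ref{thm mainteo 1}. Since $A$ is compact, it is itself a compact open subgroup of $A$, and the 2-regularity assumption means $x \mapsto 2x$ is a topological automorphism of a compact group; such an automorphism pushes Haar measure to Haar measure (by uniqueness, the push-forward is a scalar multiple of Haar measure, and since both measures assign the same value to the whole of $A$, the scalar is 1). So the doubling map is measure-preserving and Theorem \ref{thm mainteo 1} applies, giving that (a) is equivalent to $f = c\,\phi(\cdot - y)$ for some subcharacter of second degree $\phi$ of $A$.

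Next I would use connectedness twice. On the one hand, any compact open subgroup $H \subset A$ is in particular clopen, and hence equals $A$; thus the support of $\phi$ is all of $A$ and $\phi \in \mathrm{Ch}_2(A)$ is a character of second degree associated with some $\beta \in \mathrm{Sym}(A)$. On the other hand, $\widehat{A}$ is discrete (dual of a compact group), so any continuous homomorphism $\beta \colon A \to \widehat{A}$ has connected image in a discrete space, which forces $\beta = 0$. Consequently $\phi$ satisfies $\phi(x+y) = \phi(x)\phi(y)$ and so $\phi \in \widehat{A}$. Then $f(x) = c\,\phi(x-y) = c\,\overline{\phi(y)}\,\phi(x)$, which is of the form in (b).

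The reverse implication is immediate: a character $\xi \in \widehat{A}$ is itself a character of second degree (associated with the trivial $\beta$) and hence, with support $A$, a subcharacter of second degree. So $f = c\xi$ has the form in (b) of Theorem \ref{thm mainteo 1}, hence $Wf \geq 0$. No real obstacle arises; the only point requiring care is the vanishing of $\mathrm{Sym}(A)$ for a compact connected group, which is the key ingredient that collapses ``characters of second degree'' to ordinary characters in this setting.
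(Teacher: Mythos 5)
Your proposal is correct and follows essentially the same route as the paper: verify $\lambda_2=1$ from compactness and $2A=A$, apply Theorem \ref{thm mainteo 1}, then use connectedness to show every compact open subgroup equals $A$ and that $\mathrm{Sym}(A)$ is trivial (since $\widehat{A}$ is discrete), so characters of second degree reduce to ordinary characters. The only cosmetic difference is that you kill $\beta$ via connectedness of its image, while the paper notes that $\ker\beta$ is an open subgroup, hence all of $A$; these are interchangeable.
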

\begin{proof}
    Since $A$ is compact, $0<|A| < \infty$, and moreover $2A = A$ because $A$ is 2-regular. Therefore, $\lambda_2=1$ in \eqref{eq:lambda_2}, that is,  the doubling map $x\mapsto 2x$ is measure-preserving. By Theorem \ref{thm mainteo 1}, $Wf \geq 0$ everywhere if and only if $f=cT_{y}\phi$ for some subcharacter of second degree $\phi$ and some $c\in\C$. Hence we just need to prove that, when $A$ is compact and connected, subcharacters of second degree are actually characters.
    
    First of all we notice that if $A$ is connected then it has no proper open subgroups. In fact, if $H \subset A$ is an open subgroup, then it is also closed (\cite[Theorem 5.5]{hewitt63}) and therefore $H=A$. In particular, this implies that every subcharacter of second degree is actually a character of second degree.
    
    Consider now a symmetric homomorphism $\beta \colon A \to \widehat{A}$. Being $A$ compact, its dual $\widehat{A}$ is discrete and therefore $\{0\} \subset \widehat{A}$ is an open subgroup. Since $\beta$ is a continuous homomorphism, ${\rm Ker} \beta \subset A$ is an open subgroup of $A$, and therefore $\beta\equiv 1$, which implies $\mathrm{Ch}_2(A)=\widehat{A}$ or, in other words, that the only characters of second degree are the characters.
\end{proof}
The following result characterizes the $S$-states in terms of their Wigner distribution and will be used in the proof of Theorem \ref{th:Wigner smoothed is nonnegative}. 
\begin{proposition}\label{prop:Wigner of S-states}
Under the same assumptions of Theorem \ref{thm mainteo 1}, an $L^2$-normalized function $f:A \to \C$ is an $S$-state if and only if $Wf=\chi_S$, where $S \subset A \times \widehat{A}$ is a coset of a maximal compact open isotropic subgroup of $A \times \widehat{A}$. Moreover, for every subset $S$ of this kind, there exists an $S$-state $f$ such that $Wf = \chi_S$.
\end{proposition}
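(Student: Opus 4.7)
The plan is to obtain both directions by combining the identities established in the proof of Theorem \ref{thm mainteo 1} with the structural information provided by Theorem \ref{thm thm26}, and to invoke the covariance of the Wigner distribution for the existence claim.

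Suppose first that $f$ is an $L^2$-normalized $S$-state. As observed in the proof of Theorem \ref{thm mainteo 1}, one has $f^{\vee}=e^{i\theta}M_{\xi_0}T_{x_0}f$ for suitable $(x_0,\xi_0)\in A\times\widehat{A}$ and $\theta\in\R$, and the commutation relations between $T_{x_0}$ and $M_{\xi_0}$ yield the identity $|V_{f^{\vee}}f(x,\xi)|=|V_f f(x+x_0,\xi+\xi_0)|$. By Theorem \ref{thm thm26} (applied to a normalized subcharacter of second degree that generates $f$), we have $|V_f f|=\chi_G$ for a uniquely determined maximal compact open isotropic subgroup $G$ of $A\times\widehat{A}$. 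Combining this with the relation \eqref{eq:relation Wigner STFT} and the hypothesis $\lambda_2=1$ gives
\[
|Wf(x,\xi)|=\chi_G(2x+x_0,2\xi+\xi_0),\qquad (x,\xi)\in A\times\widehat{A},
\]
and since $Wf\geq 0$ by the implication (b)$\Rightarrow$(a) of Theorem \ref{thm mainteo 1}, the equality holds without the absolute value. The key step is to check that the level set $S=\{(x,\xi):(2x+x_0,2\xi+\xi_0)\in G\}$ is a coset of $G$ itself. Writing $G=\{(x,\xi):x\in H,\ \xi|_H=\beta(x)\}$, the measure-preserving hypothesis gives $2H=H$ (since $2H\subset H$ has the same finite positive measure as $H$). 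For a pair $(x,\xi)$ with $2x\in H$, the condition $(2\xi)|_H=\beta(2x)$ reads $\xi(2y)=\beta(x)(2y)$ for every $y\in H$, by bilinearity and symmetry of $\beta$; as $2y$ ranges over $2H=H$ when $y$ varies in $H$, this is equivalent to $\xi|_H=\beta(x)$. Hence the automorphism $(x,\xi)\mapsto(2x,2\xi)$ preserves $G$, from which it follows that $S$ is indeed a coset of $G$.

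The reverse direction is essentially immediate: if $Wf=\chi_S$ with $S$ a coset of a maximal compact open isotropic subgroup, then $Wf\geq 0$ and Theorem \ref{thm mainteo 1} forces $f$ to be an $S$-state; Moyal's formula \eqref{eq:Moyal formula} is automatically consistent with $|S|=1$. For the existence claim, fix a maximal compact open isotropic subgroup $G$ with associated pair $(H,\beta)$ and a coset $S=G+z$ for some $z\in A\times\widehat{A}$. The exact sequence \eqref{eq:exact sequence} supplies a character of second degree on $H$ associated with $\beta$; extending it by zero off $H$ produces a subcharacter of second degree $\phi$ of $A$, and the normalized function $f_0=\phi/\sqrt{|H|}$ is an $S$-state for which, by the forward direction, $Wf_0=\chi_{S_0}$ for some coset $S_0$ of $G$. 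Choosing $(y,\eta)\in A\times\widehat{A}$ with $S_0+(y,\eta)=S$ and setting $f=T_yM_\eta f_0$, the covariance property \eqref{eq:covariant property Wigner} gives $Wf=\chi_S$, while $f$ remains an $S$-state, since translations and modulations preserve this class.

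The main obstacle is the identity $(2,2)^{-1}G=G$ invoked in the forward direction: without it, the support of $Wf$ would still have measure $1$ by Moyal's formula, but one could not conclude that it is a coset of an isotropic subgroup, let alone of a maximal one. This is precisely the point where the measure-preserving hypothesis on the doubling map enters in a nontrivial way.
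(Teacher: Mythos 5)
Your proof is correct, but it routes the two main implications genuinely differently from the paper. For the forward direction, the paper computes the ambiguity function exactly: via Theorem \ref{thm thm26} it identifies $|H|^{-1}V_\phi\phi$ as a subcharacter of second degree on $G$, shows that $(x,\xi)\mapsto\overline{\xi(2^{-1}x)}$ restricted to $G$ is another one with the same associated symmetric homomorphism (this is where $2G=G$ enters there), uses the exact sequence \eqref{eq:exact sequence} to write $\mathcal{A}f$ as a character times $\chi_G$, and finally computes the Fourier transform of $\chi_G$ to get $Wf=\chi_{G+(y_0,\eta_0)}$. You instead extract only the modulus $|V_ff|=\chi_G$ from Theorem \ref{thm thm26}, transfer it to $|Wf|$ through \eqref{eq:relation Wigner STFT} and the relation $f^\vee=e^{i\theta}M_{\xi_0}T_{x_0}f$, and then upgrade $|Wf|$ to $Wf$ by invoking the positivity already available from Theorem \ref{thm mainteo 1} (b)$\Rightarrow$(a); the measure-preserving hypothesis enters through the invariance of $G$ under the doubling automorphism, which makes the support a coset of $G$. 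Your converse is likewise shorter: you deduce that $f$ is an $S$-state directly from Theorem \ref{thm mainteo 1} (a)$\Rightarrow$(b), whereas the paper reconstructs $f$ explicitly as $cT_{x_0}M_{\xi_0}\phi$ using that the ambiguity function determines a normalized state up to a phase, which yields finer uniqueness information not required by the statement; the existence construction via \eqref{eq:exact sequence} and the covariance property \eqref{eq:covariant property Wigner} coincides with the paper's. The trade-off is that your route leans on both implications of Theorem \ref{thm mainteo 1} (hence on the Wehrl machinery), while the paper's is a self-contained phase computation giving the exact form of $\mathcal{A}f$. One step to spell out: in verifying that doubling preserves $G$, the implication $2x\in H\Rightarrow x\in H$ (needed before writing $\beta(x)$) uses $2H=H$ together with the injectivity of the doubling map; alternatively, argue as the paper does that $2G\subset G$ is open with $|2G|=|G|$, hence $2G=G$.
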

\begin{proof}
    We start supposing $f$ is an $L^2$-normalized $S$-state. Hence, $f = cT_y\phi$, where $c \in \C \setminus\{0\}$, $y \in A$ and $\phi$ is a subcharacter of second degree associated with some pair $(H,\beta)$ (Definition \ref{def subcharacter}). Since we are supposing that $f$ is normalized, we have $1 = \|f\|_2^2 = |c|^2 \|\phi\|_2^2 = |c|^2|H|$. We will prove the desired result by passing through the ambiguity function, using the formula \eqref{eq:relation Wigner and Ambiguity}. 
    
    From a direct computation, it is easy to see that
    \begin{equation*}
        V_ff(x,\xi) =\overline{\xi(y)}|H|^{-1}V_\phi \phi(x,\xi).
    \end{equation*}
    Hence, recalling \eqref{eq:def_ambiguity} we have
    \begin{equation*}
        \A f(x,\xi) = \overline{\xi(y)}\xi(2^{-1}x) |H|^{-1}V_\phi \phi(x,\xi).
    \end{equation*}
    From Theorem \ref{thm thm26} we know that $|H|^{-1}V_\phi \phi(x,\xi)$ is a subcharacter of second degree associated (in the sense of Definition \ref{def subcharacter}) with the maximal compact open isotropic subgroup 
\begin{equation}\label{eq G}
G=\{(x,\xi)\in A\times\widehat{A}: x\in H,\  \xi|_H=\beta(x)\} \subset A \times \widehat{A},
\end{equation}
hence $|G|=1$, and the symmetric homomorphism $\beta':G\to \widehat{G}$ given by  
    \[
    \beta'(y,\eta)(x,\xi)=\overline{\eta(x)}\qquad (x,\xi),(y,\eta)\in G.
    \]
Let us now prove that the function $(x,\xi) \mapsto\overline{\xi(2^{-1}x)}$, restricted to $G$, is also a subcharacter of the second degree associated with $(G,\beta')$.
    
Indeed, the doubling map $x\mapsto 2x$ is measure-preserving in $A$, hence also in $\widehat{A}$ and in $A\times\widehat{A}$. As a consequence,  we have $G = 2G$ because $G$ and $2G$ are open, $2G\subset G$ and $|2G|=|G|$.  Hence, for every  $(x,\xi),(y,\eta) \in G$ we have
    \begin{equation*}
        \overline{(\xi+\eta)(2^{-1}x+2^{-1}y)} = \overline{\xi(2^{-1}x)}\,\overline{\eta(2^{-1}y)}\,\overline{\xi(2^{-1}y)\eta(2^{-1}x)},
    \end{equation*}
    but since also $(2^{-1}x,2^{-1}\xi),(2^{-1}y,2^{-1}\eta) \in G$, from the isotropy of $G$ we obtain
    \begin{equation*}
        (2^{-1}\xi)(2^{-1}y) = (2^{-1}\eta)(2^{-1}x) \implies \xi(2^{-1}y) = \eta(2^{-1}x),
    \end{equation*}
    which implies
    \begin{equation*}
        \overline{(\xi+\eta)(2^{-1}x+2^{-1}y)} = \overline{\xi(2^{-1}x)}\,\overline{\eta(2^{-1}y)}\,\beta'(y,\eta)(x,\xi)
    \end{equation*}
    as claimed. 
    
    From the exact sequence \eqref{eq:exact sequence} we therefore deduce that the functions $\xi(2^{-1}x)$ and $|H|^{-1}V_\phi \phi (x,\xi)$, restricted to  $G$, differ by the multiplication by a character of $G$, which extends to a character of $A\times \widehat{A}$. As a consequence, for some $(y_0,\eta_0)\in A\times \widehat{A}$, 
    \[
\mathcal{A}f(x,\xi)=\overline{\xi(y_0)}\eta_0(x) \chi_G(x,\xi)\qquad (x,\xi)\in A\times \widehat{A}.
    \]
Then (cf. \eqref{def U})
\[
\mathcal{U}\mathcal{A}f(\xi,x)=\xi(y_0)\eta_0(x) \chi_G(x,-\xi).
\]
By \eqref{eq:relation Wigner and Ambiguity} we obtain that 
\[
Wf(x,\xi)=\chi_G(x-y_0,\xi-\eta_0),
\]
where we used the fact that the Fourier transform of the function $(\xi,x)\mapsto \chi_G(x,-\xi)$ is the indicator function of the set 
\[
\{(\xi,x)\in \widehat{A}\times A:\ (x,-\xi)\in G\}^\bot= G.
\]
Let us prove this last equality. The inclusion $\supset$ is clear because $G$ is isotropic. The equality then follows because both sides are open subgroups of $A\times\widehat{A}$ of measure $1$ (see Section \ref{sec notation}).  

To prove the converse, suppose we have $f \in L^2(A)$ with $\|f\|_{L^2}=1$, such that $Wf=\chi_S$, where $S = z + G\subset A \times \widehat{A}$ is a coset of a maximal compact open isotropic subgroup $G \subset A \times \widehat{A}$. Then $G$ has the form in \eqref{eq G} (see Section \ref{sec caratteri}) 
for some compact open subgroup $H\subset A$ and some $\beta\in{\rm Sym}(H)$ (Definition \ref{def caratsecond}). It follows from the exact sequence \eqref{eq:exact sequence} that there exists some $h\in {\rm Ch}_2(H)$ associated with $\beta$. Consider the subcharacter of second degree $\phi$ defined by $\phi(x)=0$ for $x\in A\setminus H$ and $\phi(x)=h(x)$ for $x\in H$. Then $\phi$ is associated with the pair $(H,\beta)$ in the sense of Definition \ref{def subcharacter}, and it
follows from the first part of the proof that $|H|^{-1}W\phi=\chi_{S'}$, where $S'$ is some coset of $G$. Using the covariance property \eqref{eq:covariant property Wigner} one can find $(x_0,\xi_0) \in A \times \widehat{A}$ such that $|H|^{-1}W(T_{x_0}M_{\xi_0}\phi)=\chi_S$. Then, by \eqref{eq:relation Wigner and Ambiguity} and since two $L^2$-normalized functions have the same ambiguity function if and only if they differ by a phase factor (see \cite[Proposition 2.3]{nicola2023maximally}), we conclude that $f = cT_{x_0}M_{\xi_0}\phi$ for some $c \in \C \setminus \{0\}$. Since $M_{\xi_0}\phi$ is still a subcharacter of second degree, $f$ is an $S$-state. 

Finally, since the coset $S$ from which we started was arbitrary, the last part of the statement is proved.
\end{proof}

\begin{proof}[Proof of Theorem \ref{th:Wigner smoothed is nonnegative}]
    We start by supposing that $G$ is a maximal compact open isotropic subgroup of $A \times \widehat{A}$. We have
    \begin{equation*}
        Wf \ast \chi_G(x,\xi) = \int_{A \times \widehat{A}} Wf(y,\eta) \chi_G(x-y,\xi-\eta) \, dy d\eta = \langle Wf, \chi_S \rangle_{L^2(A \times \widehat{A})},
    \end{equation*}
    where $S = (x,\xi)+G$. From Proposition \ref{prop:Wigner of S-states} we know that there exists an $S$-state $g$ (which depends on $(x,\xi)$) such that $Wg=\chi_S$. Therefore, from Moyal's formula \eqref{eq:Moyal formula} we deduce
    \begin{equation*}
        Wf \ast \chi_G(x,\xi) = \langle Wf, Wg \rangle_{L^2(A \times \widehat{A})} = |\langle f,g \rangle_{L^2(A)}|^2 \geq 0.
    \end{equation*}
    Clearly, the same conclusion holds if $G$ is replaced by one of its cosets in $A\times\widehat{A}$.

    Now, assume $G$ as in the statement. Then, $G = \bigcup_{\lambda \in \Lambda} S_{\lambda}$, for some set of indices $\Lambda$, where the $S_{\lambda}$'s are the pairwise disjoint cosets, in $G$, of a maximal compact open isotropic subgroup of $A\times \widehat{A}$ (contained in $G$). Since the $S_\lambda$'s are open and $G$ is compact, $\Lambda$ is finite, and  $\chi_G = \sum_{\lambda \in \Lambda} \chi_{S_{\lambda}}$. The result then follows from the first part of the proof and the linearity of the convolution.
\end{proof}
We now switch to the case where the map $x \mapsto 2x$ is not measure-preserving.
\begin{proof}[Proof of Theorem \ref{thm mainteo2}] Let $f\in L^2(A)$, and suppose that $Wf\geq0$ everywhere. Using Lemma \ref{lem:Fourier inversion}, we have
\begin{equation*}
    \int_{A \times \widehat{A}} Wf(x,\xi) \, dx d\xi = \|f\|_{L^2(A)}^2.
\end{equation*}
On the other hand, arguing as in the  proof of Theorem \ref{thm mainteo 1}, hence using \eqref{eq:relation Wigner STFT}, where now $\lambda_2<1$ by assumption, 
\begin{align*}
     \int_{A \times \widehat{A}} Wf(x,\xi) \, dx d\xi &= \int_{A \times \widehat{A}} |Wf(x,\xi)| \, dx d\xi \\
    &= \lambda_2 \int_{A \times \widehat{A}} |V_{f^{\vee}}f(2x,2\xi)| \, dx d\xi \\
    &= \lambda_2^{-1} \int_{A \times \widehat{A}} |V_{f^{\vee}}f(x,\xi)| \, dx d\xi \\
    &\geq \lambda_2^{-1} \|f\|_{L^2(A)}^2,
\end{align*}
where in the last step we used Theorem \ref{th:Wehrl entropy}. Since $\lambda_2 < 1$, this implies $f=0$.
\end{proof}

\section{Product groups}\label{sec products}
As an interesting consequence of Theorem \ref{thm mainteo 1}, in this section we are going to consider groups of the kind $A = \bigoplus_{j \in I}A_j$ or $A = \prod_{j \in I} A_j$, where $\{A_j\}_{j \in I}$ is a (possibly infinite) family of, respectively, discrete or compact 2-regular groups. We start by considering the discrete case.
\subsection{Direct sum of discrete groups}
Let $I$ be a nonempty set of indices and $\{A_j\}_{j \in I}$ a family of discrete 2-regular groups. We consider the direct sum of these groups, that is, the group
\begin{equation*}
    A=\bigoplus_{j \in I} A_j
\end{equation*}
consisting of all elements $x = (x_j)_{j \in I}$ in the Cartesian product of all $A_j$'s such that $x_j=0$ for all but finitely many $j \in I$, with the sum defined componentwise. We equip it with the discrete topology, so that it becomes an LCA group. We take the counting measure as Haar measure on $A$. We recall that the dual group of $A$ is the direct product of the $\widehat{A_j}$'s (equipped with the product topology), that is, \[
\widehat{\bigoplus_{j \in I}A_j} = \prod_{j \in I} \widehat{A_j}
\]
(see \cite[Theorem 23.22]{hewitt63}), which is a compact group of measure 1, as a consequence of the Plancherel formula.

In the following, the index set $I$ will be divided as $I=I'\cup I''$, for disjoint and nonempty subsets $I'$ and $I''$. This induces corresponding splittings (topological isomorphisms) $A\simeq A'\times A''$, and $\widehat{A}\simeq \widehat{A'}\times \widehat{A''}$, $A\times\widehat{A}\simeq A'\times A''\times \widehat{A'}\times \widehat{A''}$. We will consider tensor products $f\otimes g$, for $f\in \ell^2(A')$, $g\in \ell^2(A'')$ and similarly for functions in $\widehat{A}$, or in $A\times\widehat{A}$. 
\begin{theorem}\label{th:direct sum}
    Let $\{A_j\}_{j \in I}$ be a nonempty family of discrete 2-regular groups, and let $A = \bigoplus_{j \in I}A_j$ be its direct sum with the discrete topology and the counting measure. Then, the following are equivalent for $f \in \ell^2(A) \setminus \{0\}$.
    \begin{itemize}
        \item[(a)] $Wf(x,\xi) \geq 0$ for all $(x,\xi) \in A \times \widehat{A}$.
        \item[(b)] There exist $F \subset I$ finite and an $S$-state $\widetilde{f}$ for $\bigoplus_{j \in F}A_j$ such that 
        \begin{equation}\label{eq:S-state in direct sum group}
            f = \widetilde{f} \otimes \delta,
        \end{equation}
        where $\delta \colon \bigoplus_{j \in I \setminus F} A_j \to \C$ is such that $\delta(y)=1$ if $y=0$ and $\delta(y)=0$ otherwise.
    \end{itemize}
\end{theorem}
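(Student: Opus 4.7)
The plan is to invoke Theorem \ref{thm mainteo 1} and then reinterpret the resulting $S$-state condition in light of the combinatorial structure of the direct sum $A=\bigoplus_{j\in I}A_j$. I would first verify the hypotheses of Theorem \ref{thm mainteo 1}: since $A$ carries the discrete topology and the counting measure, $A$ is LCA and contains the compact open subgroup $\{0\}$; since each $A_j$ is $2$-regular, the componentwise map $x\mapsto 2x$ is a bijection on $A$, hence (on a discrete group) a topological isomorphism, so $A$ is $2$-regular; and the counting measure is trivially preserved by any bijection. Thus the theorem applies and condition (a) is equivalent to $f$ being an $S$-state of $A$ in the sense of Definition \ref{def Sstate}.

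The key structural observation is that, in the discrete group $A$, compact open subgroups coincide with finite subgroups, and every finite subgroup $H\subset A$ is contained in $\bigoplus_{j\in F_H}A_j$ for the finite set
\begin{equation*}
    F_H\;\coloneqq\;\bigcup_{h\in H}\{\,j\in I\,:\,h_j\neq 0\,\},
\end{equation*}
which is finite because each element of $A$ has finite support. Given an $S$-state $f=c\,\phi(\,\cdot\, -y)$ of $A$ with $\phi$ a subcharacter of second degree associated with some $(H,\beta)$ and with $y\in A$, I would enlarge $F_H$ to a finite set $F\subset I$ that also contains the support of $y$, and decompose $A=A'\times A''$ with $A'=\bigoplus_{j\in F}A_j$ and $A''=\bigoplus_{j\in I\setminus F}A_j$. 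Then $H\subset A'\times\{0\}$ and $y=(y',0)$ with $y'\in A'$. Since $\phi$ vanishes off its support $H$, one has $\phi(x',x'')=0$ whenever $x''\neq 0$, so $\phi=\phi'\otimes\delta$, where $\phi'(x'):=\phi(x',0)$ and $\delta$ is the indicator of $\{0\}$ in $A''$. A short check using Definition \ref{def caratsecond} shows that $\phi'$ is a subcharacter of second degree of $A'$ associated with $(H,\beta')$, where $\beta'(x')(y'):=\beta((x',0))((y',0))$. Substituting then yields $f=\widetilde{f}\otimes\delta$ with $\widetilde{f}(x')=c\,\phi'(x'-y')$ an $S$-state of $A'=\bigoplus_{j\in F}A_j$, which is exactly (b).

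For the converse, I would run the same factorization in reverse: if $\widetilde{f}=c\,\phi'(\,\cdot\,-y')$ is an $S$-state of $A'=\bigoplus_{j\in F}A_j$, with $\phi'$ associated with some $(H,\beta')$ and $y'\in A'$, then $\phi'\otimes\delta$ is a subcharacter of second degree of $A$ whose support is the finite subgroup $H\times\{0\}$ and whose associated symmetric homomorphism is obtained by pulling $\beta'$ back under the projection $A'\times\{0\}\to A'$; hence $f=c(\phi'\otimes\delta)(\,\cdot\,-(y',0))$ is an $S$-state of $A$, and Theorem \ref{thm mainteo 1} delivers $Wf\geq 0$. The only substantive point in the whole argument is the factorization $\phi=\phi'\otimes\delta$, which rests on the fact that a subcharacter of second degree vanishes off its (finite) support, forcing the coordinates indexed by $I\setminus F$ to be zero; everything else reduces to routine verification of how characters of second degree behave under direct-sum decompositions.
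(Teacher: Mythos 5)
Your proof is correct, and it reaches the key structural claim --- that the $S$-states of $A=\bigoplus_{j\in I}A_j$ are exactly the functions $\widetilde f\otimes\delta$ with $\widetilde f$ an $S$-state of a finite sub-sum --- by a genuinely different route than the paper. You work directly with Definitions \ref{def caratsecond}--\ref{def Sstate}: you observe that a compact open subgroup of the discrete group $A$ is a finite subgroup, hence sits inside $\bigoplus_{j\in F}A_j$ for a finite $F$, and you then factor the subcharacter of second degree itself as $\phi=\phi'\otimes\delta$, checking by hand that $\phi'$ is a subcharacter of second degree of $A'=\bigoplus_{j\in F}A_j$ with the symmetric homomorphism $\beta'$ obtained by restriction along $H\cong H'\times\{0\}$, and conversely that $\phi'\otimes\delta$ is one for $A$. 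The paper instead only uses that the support of $f$ is finite to get the tensor form $f=\widetilde f\otimes\delta$, and then proves the equivalence ``$f$ is an $S$-state of $A$ iff $\widetilde f$ is an $S$-state of $A'$'' indirectly, via the factorization $V_ff=V_{\widetilde f}\widetilde f\otimes V_\delta\delta$, the computation $|\{V_\delta\delta\neq0\}|=1$, and the measure-of-support characterization of $S$-states in Theorem \ref{th:equivalence for STFT}. Your approach is more elementary and more explicit: it exhibits precisely how the data $(H,\beta,y,c)$ of an $S$-state decomposes across the splitting $A\simeq A'\times A''$, at the cost of some bookkeeping with the symmetric homomorphisms. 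The paper's route is shorter and avoids all such bookkeeping, and the same STFT-support mechanism (via Lemma \ref{lem:f and f^ are S-states}) is then reused for the compact-product case, which is why the authors prefer it. The only cosmetic point in your write-up is that $\phi'$ is associated with the pair $(H',\beta')$, where $H'\subset A'$ is the image of $H$ under the identification $A'\times\{0\}\simeq A'$, rather than with $H$ itself; this is a harmless notational identification, not a gap. Your verification of the hypotheses of Theorem \ref{thm mainteo 1} (2-regularity of $A$ and $\lambda_2=1$ for the counting measure) is also fine and matches the paper's.
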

\begin{proof}
    We observe that $A$ is 2-regular since the map $x \mapsto 2x$ can be inverted componentwise and  $A$ has the discrete topology. Moreover, the compact open subgroup $\{0\}$ is mapped on itself, therefore $\lambda_2=1$. Hence, from Theorem \ref{thm mainteo 1} we know that $Wf\geq0$ everywhere if and only if $f$ is an $S$-state. So, to prove our theorem, we just need to prove that all $S$-states of $A$ are of the kind \eqref{eq:S-state in direct sum group}.

    First of all, we notice that an $S$-state $f$ is necessarily supported in a compact, and hence finite set $K\subset A$. Therefore, there exists $F\subset I$ finite such that if $x=(x_j)_{j\in I}\in K$ then $x_j=0$ for $j\in I\setminus F$. Consider the splitting 
    \[
    A\simeq \bigoplus_{j\in F} A_j \times \bigoplus_{j\in I\setminus F} A_j,
    \]
    and correspondingly write $x=(x',x'')$ for $x\in A$. Then $f$
    has necessarily the form in \eqref{eq:S-state in direct sum group}, with $\tilde{f}(x'):=f(x',0)$; in particular $\widetilde{f} \in \ell^2(\bigoplus_{j \in F}A_j)$. So, we just need to prove that $f$ is an $S$-state of $A$ if and only if $\widetilde{f}$ is also an $S$-state of $\bigoplus_{j \in F}A_j$. According to the splitting induced in $A\times\widehat{A}$ 
     we have 
    $V_f f = V_{\widetilde{f}}\widetilde{f} \otimes V_{\delta}\delta$, and therefore 
    \[
    \{V_f f\not=0\}=\{V_{\widetilde{f}}\widetilde{f}\not=0\}\times \{ V_{\delta}\delta\not=0 \}.
    \]
     From a direct computation, one can see that $V_{\delta}\delta(y,\eta)=1$ if $y=0$ and is 0 otherwise, which implies 
    \[
    |\{ V_{\delta}\delta\not=0 \}|=|\{0\} \times \prod_{j \in I\setminus F}\widehat{A_j} |=|\prod_{j \in I\setminus F}\widehat{A_j}|=1.
    \]
    Therefore 
    \[
    |\{V_f f\not=0\}| = |\{V_{\widetilde{f}}\widetilde{f}\not=0\}|.
    \]
    From Theorem \ref{th:equivalence for STFT} we conclude that $f$ is an $S$-state if and only if also $\widetilde{f}$ is an $S$-state.
\end{proof}

\subsection{Direct product of compact groups}
In this section, we consider a family $\{A_j\}_{j \in I}$ of 2-regular compact Hausdorff Abelian groups, each with normalized Haar measure, and their direct product
\begin{equation*}
    A = \prod_{j \in I}A_j,
\end{equation*}
which is a compact Hausdorff Abelian group with Haar measure given by the Radon product of the Haar measures on the $A_j$'s (see \cite{folland_real_analysis}, in particular Section 7.4 for the definition of the Radon product of two Radon measures and Exercise 6 of Chapter 11 for the Haar measure on the direct product of compact groups). We observe that the dual group of $A$ is the direct sum of the $\widehat{A_j}$' s, that is, $\bigoplus_{j \in I}\widehat{A_j}$ equipped with with the discrete topology (see \cite[Theorem 23.21]{hewitt63}), where each $\widehat{A_j}$ is also discrete. 

Using the duality given by the Fourier transform, we can obtain the analogue of Theorem \ref{th:direct sum} but for the direct product of compact groups. Before doing so, we need the following lemma. 
\begin{lemma}\label{lem:f and f^ are S-states}
    Let $A$ be an LCA group. A function $f \in L^2(A)$ is an $S$-state for $A$ if and only if $\widehat{f} \in L^2(\widehat{A})$ is an $S$-state for $\widehat{A}$.
\end{lemma}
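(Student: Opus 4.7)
The plan is to reduce the statement to showing that the Fourier transform of a single subcharacter of second degree is itself an $S$-state of $\widehat{A}$, and then to combine Theorem \ref{thm thm26} with a standard covariance identity for the short-time Fourier transform under the Fourier transform, together with Theorem \ref{th:equivalence for STFT}.

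First, any $S$-state has the form $f=c\,T_y\phi$ with $\phi$ a subcharacter of second degree of $A$ associated with some $(H,\beta)$. Since $\widehat{T_y\phi}(\xi)=\overline{\xi(y)}\,\widehat{\phi}(\xi)$, one has $\widehat{f}=c\,\overline{\xi(y)}\,\widehat{\phi}$, that is, a scalar multiple of a modulation of $\widehat{\phi}$. The class of $S$-states on any LCA group is easily seen to be closed under translations, modulations and multiplication by nonzero scalars: indeed, modulating a subcharacter of second degree associated with $(H,\beta)$ yields another subcharacter of second degree associated with the same pair, using that $\xi|_H$ is an ordinary character of $H$. It therefore suffices to prove that $\widehat{\phi}$ is an $S$-state of $\widehat{A}$.

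A direct Plancherel computation (passing the modulation/translation onto the other factor and using $\widehat{T_{-x}f}(\tau)=\tau(x)\widehat{f}(\tau)$, $\widehat{M_\xi g}(\tau)=\widehat{g}(\tau-\xi)$) yields the covariance identity
\[
V_{\widehat{g}}\widehat{f}(\xi,-x)=\xi(x)\,V_g f(x,\xi),\qquad (x,\xi)\in A\times\widehat{A}.
\]
Specialising to $f=g=\phi$ and invoking Theorem \ref{thm thm26}, which gives $|V_\phi\phi|=|H|\chi_G$ for the maximal compact open isotropic subgroup $G\subset A\times\widehat{A}$ associated with $(H,\beta)$, we obtain $|V_{\widehat{\phi}}\widehat{\phi}|=|H|\chi_{G'}$, where $G'=\{(\xi,-x)\in\widehat{A}\times A:(x,\xi)\in G\}$. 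The map $(x,\xi)\mapsto(\xi,-x)$ is a measure-preserving topological group isomorphism $A\times\widehat{A}\to\widehat{A}\times\widehat{\widehat{A}}=\widehat{A}\times A$ which transports the standard symplectic bicharacter on $A\times\widehat{A}$ to that on $\widehat{A}\times A$, so $G'$ is a compact open isotropic subgroup of $\widehat{A}\times A$ with $|G'|=|G|=1$, hence maximal compact open isotropic. Since $\|\widehat{\phi}\|_{L^2(\widehat{A})}^2=\|\phi\|_{L^2(A)}^2=|H|$ by Plancherel, the normalised window $g:=\widehat{\phi}/|H|^{1/2}$ satisfies $|V_g g|=\chi_{G'}$, so Theorem \ref{th:equivalence for STFT} applied with $\widehat{A}$ in place of $A$ forces $g$, and hence $\widehat{\phi}$, to be an $S$-state of $\widehat{A}$.

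The converse follows by applying the direct implication to $\widehat{f}$ in the LCA group $\widehat{A}$ (noting that the argument above uses nothing more than the general LCA structure of the ambient group), combined with the Fourier inversion identity $\widehat{\widehat{f}}(x)=f(-x)$ and the elementary fact that $S$-states are preserved under reflection: if $h\in{\rm Ch}_2(H)$ is associated with $\beta$, then $h(-\,\cdot\,)$ is again a character of second degree on $H=-H$ associated with the same $\beta$. The main obstacle, while essentially bookkeeping, is the careful verification of the symplectic/measure-preserving properties of the swap $(x,\xi)\mapsto(\xi,-x)$ in Step 4 and of the covariance identity, both of which require tracking the conventions for dualities and modulations.
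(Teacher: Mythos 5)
Your proof is correct: the covariance identity $V_{\widehat{g}}\widehat{f}(\xi,-x)=\xi(x)\,V_gf(x,\xi)$ is exactly the identity the paper's proof rests on, and your bookkeeping (the swap $(x,\xi)\mapsto(\xi,-x)$ preserves Haar measure and transports the symplectic bicharacter, $S$-states are stable under translation, modulation, nonzero scalars and reflection) checks out. The difference is in how much of Theorem \ref{th:equivalence for STFT} you use. The paper extracts only the crudest consequence of the identity: since the swap is measure-preserving, $|\{V_{\widehat{f}}\widehat{f}\neq 0\}|=|\{V_ff\neq 0\}|$, and the first condition of Theorem \ref{th:equivalence for STFT} (support of measure $1$, applied with $g=f$ after normalizing, which is harmless by Plancherel and scale-invariance of both properties) characterizes $S$-states; this gives the equivalence for arbitrary $f\in L^2(A)$ in one step and in both directions simultaneously. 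You instead reduce to a single subcharacter $\phi$, invoke Theorem \ref{thm thm26} to get $|V_\phi\phi|=|H|\chi_G$, verify that the swap carries maximal compact open isotropic subgroups to maximal compact open isotropic subgroups so as to use the second condition of Theorem \ref{th:equivalence for STFT}, and then handle the converse separately via $\widehat{\widehat{f}}(x)=f(-x)$ and reflection invariance. This yields a bit more information (the explicit support $G'$ of $V_{\widehat{\phi}}\widehat{\phi}$, essentially the Fourier-transformed version of Theorem \ref{thm thm26}), but none of it is needed for the lemma: the support-measure criterion renders Theorem \ref{thm thm26}, the isotropy transport, and the separate converse argument superfluous.
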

\begin{proof}
    From the easily verified relation
    \begin{equation*}
        V_{\widehat{f}}\widehat{f}(\xi,x) = \overline{\xi(x)} V_ff(-x,\xi) \quad (x,\xi) \in A \times \widehat{A}
    \end{equation*}
    we see that $|\{V_{\widehat{f}}\widehat{f} \neq 0\}| = |\{V_ff \neq 0\}|$, but from the characterization given by Theorem \ref{th:equivalence for STFT} this implies that $f$ is an $S$-state for $A$ if and only if $\widehat{f}$ is an $S$-state for $\widehat{A}$.
\end{proof}
\begin{theorem}\label{th:direct product}
    Let $\{A_j\}_{j \in I}$ be a family of 2-regular compact Hausdorff Abelian groups. Then, the following facts are equivalent for a function $f \in L^2(A) \setminus \{0\}$.
    \begin{itemize}
        \item[(a)] $Wf(x,\xi) \geq 0$ for all $(x,\xi) \in A \times \widehat{A}$.
        \item[(b)] There exist a finite subset $F \subset I$ and an $S$-state $\widetilde{f}$ for $\prod_{j \in F}A_j$ such that
        \begin{equation}\label{eq:S-states on direct product}
            f = \widetilde{f}\otimes 1, \quad x \in A,
        \end{equation}
        where $1$ denotes the function $\prod_{j \in I\setminus F}A_j\to \C$ identically equal to $1$.
    \end{itemize}
\end{theorem}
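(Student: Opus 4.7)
The plan is to reduce the statement to Theorem \ref{th:direct sum} via Pontryagin duality. Since each $A_j$ is compact and 2-regular, its dual $\widehat{A_j}$ is discrete and 2-regular, and hence $\widehat{A} = \bigoplus_{j \in I} \widehat{A_j}$ is a direct sum of discrete 2-regular Abelian groups, falling within the scope of Theorem \ref{th:direct sum}. Moreover, $A$ itself is compact Hausdorff and 2-regular: the doubling map is continuous and bijective componentwise, hence a homeomorphism by compactness, and it is measure-preserving since $|A|<\infty$ and $2A=A$.

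By the identity \eqref{eq wigner fourier}, namely $Wf(x,\xi)=W\widehat{f}(\xi,-x)$, condition (a) is equivalent to $W\widehat{f}\geq 0$ on $\widehat{A}\times A$. Applying Theorem \ref{th:direct sum} to $\widehat{f}\in\ell^2(\widehat{A})$ characterizes this in turn by the existence of a finite $F\subset I$ and an $S$-state $\widetilde{g}$ for $\bigoplus_{j\in F}\widehat{A_j}$ such that
\[
\widehat{f} = \widetilde{g} \otimes \delta,
\]
where $\delta$ is the indicator of $\{0\}$ in $\bigoplus_{j\in I\setminus F}\widehat{A_j}$.

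To translate this back to $f$, I would apply the Fourier inversion formula with respect to the splitting $\widehat{A}\simeq \bigoplus_{j\in F}\widehat{A_j}\times \bigoplus_{j\in I\setminus F}\widehat{A_j}$. A direct calculation, using that the Haar measure on $\prod_{j\in I\setminus F}A_j$ is normalized (and thus dual to the counting measure on $\bigoplus_{j\in I\setminus F}\widehat{A_j}$), shows that the inverse Fourier transform of $\delta$ is the constant function $1$ on $\prod_{j\in I\setminus F} A_j$. Therefore $f=\widetilde{f}\otimes 1$, where $\widetilde{f}$ is the inverse Fourier transform of $\widetilde{g}$ on $\prod_{j\in F}A_j$. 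By Lemma \ref{lem:f and f^ are S-states}, applied to the compact 2-regular group $\prod_{j\in F}A_j$, $\widetilde{f}$ is an $S$-state if and only if $\widetilde{g}$ is, which yields (b). The reverse implication (b)$\Rightarrow$(a) follows by running exactly the same chain of equivalences backwards. I do not foresee any essential obstacle; the only point requiring care is to match the measure normalizations so that $\delta$ and $1$ form an exact Fourier pair across the complementary factors.
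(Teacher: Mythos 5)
Your proposal is correct and follows essentially the same route as the paper: both reduce to Theorem \ref{th:direct sum} via the identity \eqref{eq wigner fourier}, identify the inverse Fourier transform of $\widetilde{g}\otimes\delta$ as $\widetilde{f}\otimes 1$ (using the duality of normalized Haar measure and counting measure), and transfer the $S$-state property with Lemma \ref{lem:f and f^ are S-states}. Your additional verification that $A$ itself is $2$-regular and that the normalizations match is a sound, if implicit in the paper, point of care.
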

\begin{proof}
     By \eqref{eq wigner fourier}, $Wf \geq 0$ everywhere if and only if $W \widehat{f} \geq 0$ everywhere. Now, $\widehat{f} \in L^2(\widehat{A})$ and $\widehat{A}=\bigoplus_{j\in I}\widehat{A_j}$, so we can apply Theorem \ref{th:direct sum} to conclude that (a) holds true if and only if there exist $F \subset I$ finite and an $S$-state $g$ for  $\bigoplus_{j \in F} \widehat{A_j}$ such that $\widehat{f}= g \otimes \delta$. Taking the inverse Fourier transform we obtain that $f = \widetilde{f} \otimes 1$, with $g=\widehat{\widetilde{f}}$, and $\widetilde{f}$ is an $S$-state for $\prod_{j \in F}A_{j}$ if and only if $g$ is an $S$-state for $\bigoplus_{j \in F} \widehat{A_j}$, by Lemma \ref{lem:f and f^ are S-states}.
    
\end{proof}

\section{Infinite quantum spin systems and thermodynamic limit}\label{sec quantumspin}
In this section we show that Theorem \ref{th:direct sum} admits an interpretation in terms of the thermodynamic limit of a composite system where every subsystem is represented by the Hilbert space $\ell^2(A_j)$, where $A_j$ is a discrete 2-regular Abelian group. When each $A_j$ is finite, we end up with a (possibly infinite)  quantum spin system (see \cite{naaijkens}, in particular Sections 3.1 and 3.2). A similar interpretation of Theorem \ref{th:direct product} is discussed in the following.

According to von Neumann's construction (see, e.g., \cite[Section 4.5]{baez_segal_zhou}) of the (incomplete) infinite tensor product of Hilbert spaces --- in this case, $\ell^2(A_j)$ --- we need to select a \emph{ground state}, that is, a distinguished normalized vector in each space. For every $\ell^2(A_j)$, we choose as ground state the function $\delta_j \colon A_j \to \C$ given by $\delta_j(x)=1$ if $x=0$ and 0 otherwise. As ground state for $\ell^2(A)$, with \[
A=\bigoplus_{j\in I}A_j,
\]
we choose the function $\delta \colon A \to \C$ given by $\delta(x)=1$ if $x=0$ and 0 otherwise. Then, we have
\begin{equation*}
    (\ell^2(A),\delta) \simeq \bigotimes_{j \in I} (\ell^2(A_j),\delta_j),
\end{equation*}
where $\otimes_{j \in I}$ denotes the grounded tensor product. Exactly, with the symbol $\simeq$ we mean,  according to \cite[Definition on page 126]{baez_segal_zhou}, that for every $\Delta \subset I$ finite, there exists a linear isometric map \[
\tau_{\Delta} \colon \bigotimes_{j \in \Delta} \ell^2(A_j) \to \ell^2(A)
\]
such that
\begin{itemize}
    \item if $\Delta' \supset \Delta$ is another finite set of indices and if $f_j = \delta_j$ for $j \in \Delta'\setminus  \Delta$, then $\tau_{\Delta}\left( \otimes_{j \in \Delta} f_j\right) = \tau_{\Delta'}\left( \otimes_{j \in \Delta'} f_j\right)$;
    \item $\tau_{\Delta}\left( \otimes_{j \in \Delta} \delta_j\right) = \delta$;
    \item the union of the ranges of all the $\tau_{\Delta}$'s is dense in $\ell^2(A)$.
\end{itemize}
These maps $\tau_{\Delta}$ are given concretely by defining the function $\tau_{\Delta} \left( \otimes_{j \in \Delta} f_j\right)\in \ell^2(A)$ as
\begin{equation*}
    A \ni x \mapsto\prod_{j \in \Delta}f_j(x_j) \prod_{j \in I \setminus \Delta} \delta_j(x_j), \quad x = (x_j)_{j\in A} \in A,
\end{equation*}
which is well defined since in the second product only a finite number of terms is different from 1. The fact that the union of the ranges of all the $\tau_{\Delta}$'s is dense in $\ell^2(A)$ can be checked by observing that functions in $\ell^2(A)$ can be approximated by functions with finite support and arguing as in the proof of Theorem \ref{th:direct sum}. 

This realization of $(\ell^2(A),\delta)$ as a grounded tensor product is compatible with the action of the Weyl--Heisenberg operators in the subsystems and in the entire system. Indeed, consider $x = (x_j)_{j\in A} \in A$ and $\xi = (\xi_j)_{j\in A} \in \widehat{A} = \prod_{j \in I} \widehat{A_j}$ and let $F \subset I$ be a finite set such that $x_j=0$ for $j \notin F$. We claim that, for every $\Delta \supset F$ finite, we have 
\begin{equation*}
    w(x,\xi) \tau_{\Delta} \left( \otimes_{j \in \Delta} f_j\right) = \tau_{\Delta} \left( \otimes_{j \in \Delta} w(x_j,\xi_j) f_j \right), \quad f_j \in \ell^2(A_j).
\end{equation*}
Indeed, one can easily verify that $w(x_j,\xi_j) \delta_j=\delta_j$ for $j \notin F$ (and, in particular, for $j \notin \Delta$) and therefore
\begin{align*}
    w(x,\xi) \tau_{\Delta} \left( \otimes_{j \in \Delta} f_j\right)(y) &= \overline{\xi(2^{-1}x)} \xi(y) \prod_{j \in \Delta} f_j(y_j-x_j) \prod_{j \in I \setminus \Delta} \delta(y_j-x_j) \\
    &= \prod_{j \in \Delta} w(x_j,\xi_j) f_j(y_j) \prod_{j \in I \setminus \Delta} \delta_j(y_j)\\
    &=\tau_{\Delta}\left( \otimes_{j \in \Delta} w(x_j,\xi_j)f_j \right)(y).
\end{align*}
Therefore, for every family $(f_j)_{j \in I}$, $f_j \in \ell^2(A_j)$, for which the limit of $\tau_{\Delta}(\otimes_{j \in \Delta} f_j)$ as $\Delta \to I$ exists, we have
\begin{equation}\label{eq:limit of Weil-Heisenberg}
    w(x,\xi) \left[ \lim_{\Delta \to I} \tau_{\Delta}(\otimes_{j \in \Delta} f_j) \right] = \lim_{\Delta \to I} \tau_{\Delta} \left( \otimes_{j \in \Delta} w(x_j,\xi_j)f_j\right).
\end{equation}
This means (cf. \cite[Lemma 4.4.1]{baez_segal_zhou}) that 
\begin{equation*}
    w(x, \xi) = \bigotimes_{j \in I} w(x_j,\xi_j).
\end{equation*}
\begin{remark}\label{remark:lemma di Segal}
 According to \cite[Lemma 4.4.1]{baez_segal_zhou}, the tensor product $\bigotimes_{j \in I} w(x_j,\xi_j)$ is the unique unitary operator $U$ (if it exists) satisfying 
 \[
U \left[ \lim_{\Delta \to I} \tau_{\Delta}(\otimes_{j \in \Delta} f_j) \right] = \lim_{\Delta \to I} \tau_{\Delta} \left( \otimes_{j \in \Delta} w(x_j,\xi_j)f_j\right)
 \]
 for every family $(f_j)_{j \in I}$, $f_j \in \ell^2(A_j)$, for which the limit of $\tau_{\Delta}(\otimes_{j \in \Delta} f_j)$ as $\Delta \to I$ exists and we have just verified that this relation is satisfied with $U=w(x,\xi)$.
Moreover, always in \cite[Lemma 4.4.1]{baez_segal_zhou}, a condition is given for the existence of the infinite tensor product of unitary operators. In our context, this condition is that the product
\begin{equation*}
    \prod_{j \in I} \langle w(x_j,\xi_j) \delta_j, \delta_j \rangle_{\ell^2(A_j)}
\end{equation*}
converges, which is easily verified since all but finitely many terms of the product are equal to one, because $x = (x_j)_{j\in I} \in A=\bigoplus_{j \in I} A_j$.
\end{remark}
It is interesting to point out that the choice of ground states is crucial for this construction. In fact, the following holds.
\begin{proposition}
    Let $A_0$ be a discrete group, with $A_0 \neq \{0\}$, and let $A_j = A_0$, $j \in I$ for some infinite set of indices $I$. Let $x_0 \in A_0$, $x_0 \neq 0$, and $\xi_0 \in \widehat{A_0}$ such that $\xi_0(x_0) \neq 1$. Let $g \in \ell^2(A_0)$ such that $g(x)=1$ if $x=x_0$ and $g(x)=0$ if $x \neq x_0$. Let $\tau_{\Delta}$, for $\Delta \subset I$ finite, denote the maps in the definition of the grounded tensor product $\bigotimes_{j \in I} (\ell^2(A_j), g)$. Then, there is no unitary operator $U$ on $\bigotimes_{j \in I} (\ell^2(A_j), g)$ such that
    \begin{equation*}
        \lim_{\Delta \to I} \tau_{\Delta} \left(\otimes_{j \in \Delta} w(0,\xi_0) f_j\right) = U \left[ \lim_{\Delta \to I} \tau_{\Delta} \left( \otimes_{j \in \Delta} f_j\right)\right]
    \end{equation*}
    holds for every family $(f_j)_{j \in I}$, $f_j \in \ell^2(A_j)$, for which the limit in the right-hand side exists.
\end{proposition}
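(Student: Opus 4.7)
The plan is to derive a contradiction by testing the conjectured identity on the constant family $f_j=g$ for every $j\in I$. The key computation is that $g$ is an eigenvector of $w(0,\xi_0)$: since $2^{-1}(0)=0$ and $\xi_0(0)=1$, we have $w(0,\xi_0)=M_{\xi_0}$, and for every $y\in A_0$
\[
(M_{\xi_0}g)(y)=\xi_0(y)g(y)=\xi_0(x_0)g(y),
\]
the last equality because $g$ is supported on $\{x_0\}$. Hence $w(0,\xi_0)g=\alpha g$ with $\alpha:=\xi_0(x_0)$, a unimodular scalar different from $1$ by hypothesis.

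Next, I would exploit the compatibility property defining the $\tau_\Delta$'s: whenever $\Delta'\supset\Delta$ and the extra factors equal the ground state $g$, one has $\tau_\Delta(\otimes_{j\in\Delta} g)=\tau_{\Delta'}(\otimes_{j\in\Delta'} g)$. Hence the net $\{\tau_\Delta(\otimes_{j\in\Delta} g)\}_\Delta$ is in fact constant, equal to a single normalized vector $\Omega$, and in particular $\lim_{\Delta\to I}\tau_\Delta(\otimes_{j\in\Delta} g)=\Omega$. Specializing the conjectured identity to $f_j=g$, the boundedness of the hypothetical unitary $U$ would force the left-hand side limit to exist in norm and equal $U\Omega$. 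However, by multilinearity,
\[
\tau_\Delta\Bigl(\bigotimes_{j\in\Delta} w(0,\xi_0)g\Bigr)=\alpha^{|\Delta|}\,\tau_\Delta\Bigl(\bigotimes_{j\in\Delta} g\Bigr)=\alpha^{|\Delta|}\,\Omega,
\]
and since $I$ is infinite, for every finite $\Delta\subset I$ I can choose $\Delta'\supset\Delta$ with $|\Delta'|=|\Delta|+1$, giving
\[
\bigl\|\alpha^{|\Delta|}\Omega-\alpha^{|\Delta'|}\Omega\bigr\|=|1-\alpha|>0.
\]
So the net is not Cauchy and does not converge, contradicting the existence of $U$.

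The only (mild) obstacle is of a bookkeeping nature: one must keep track that the convergence in the claimed identity is along the directed set of finite subsets of $I$ in the norm topology of $\bigotimes_{j\in I}(\ell^2(A_j),g)$, and that a unitary operator is in particular continuous, so that convergence of the right-hand side truly forces the left-hand side to converge to $U\Omega$. This matches the failure mode hinted at in Remark \ref{remark:lemma di Segal}: with the new ground state $g$, the product $\prod_{j\in I}\langle w(0,\xi_0)g,g\rangle$ reduces to a product of unimodular scalars all equal to $\overline{\alpha}$ (or $\alpha$, depending on the convention for the inner product), which fails to converge precisely because $\alpha\neq 1$.
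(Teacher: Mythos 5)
Your proof is correct, but it follows a different (and more self-contained) route than the paper's. The paper disposes of the proposition in two lines by invoking the criterion from Baez--Segal--Zhou \cite[Lemma 4.4.1]{baez_segal_zhou}, already quoted in Remark \ref{remark:lemma di Segal}: the operator $U=\bigotimes_{j\in I}w(0,\xi_0)$ exists as a unitary on the grounded tensor product if and only if the product $\prod_{j\in I}\langle w(0,\xi_0)g,g\rangle_{\ell^2(A_0)}$ converges, and each factor is constantly $\overline{\xi_0(x_0)}\neq 1$, so the product diverges. You instead avoid the external lemma entirely: you observe that $g$ is an eigenvector of $w(0,\xi_0)=M_{\xi_0}$ with unimodular eigenvalue $\alpha=\xi_0(x_0)\neq 1$, test the claimed intertwining identity on the constant family $f_j=g$ (for which the right-hand side net is constant, equal to the ground vector $\Omega$), and show that the left-hand side net $\alpha^{|\Delta|}\Omega$ is not Cauchy because $I$ is infinite and $|1-\alpha|>0$, so the asserted equality cannot hold for this family. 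The underlying computation is the same in both arguments ($\langle w(0,\xi_0)g,g\rangle=\overline{\alpha}$), but yours trades the citation for an explicit divergence argument, which makes the proof elementary and independent of the infinite-tensor-product machinery, at the cost of a few extra lines; the paper's version buys brevity and consistency with the framework already set up in Remark \ref{remark:lemma di Segal}. One small stylistic point: your appeal to the boundedness of $U$ is superfluous --- the identity in the statement already asserts that the left-hand limit exists and equals $U\Omega$, so the divergence of the left-hand net is by itself the contradiction.
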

\begin{proof}
    As already mentioned in Remark \ref{remark:lemma di Segal}, the existence of such an operator $U$ is equivalent to the product
    \begin{equation*}
        \prod_{j \in I} \langle w(0,\xi_0)g,g \rangle_{\ell^2(A_0)}
    \end{equation*}
    being convergent. However, a direct computation shows that the value of each scalar product is constantly equal to $\overline{\xi_0(x_0)} \neq 1$, and therefore the product cannot converge --- otherwise, the terms of the product would have to converge to 1.
\end{proof}

 Theorem \ref{th:direct product} admits a similar interpretation, that is, we can rephrase the above discussion in the case where $(A_j)_{j \in I}$ is a nonempty family of compact Hausdorff 2-regular Abelian groups, with normalized Haar measure, and the direct sum is replaced by the direct product, that is, \[
A=\prod_{j
\in I}A_j.
\]
In this case, as ground state for every Hilbert space $L^2(A_j)$ we choose the trivial character $g_j \in \widehat{A_j} \subset L^2(A_j)$ (that is, $g_j(x)=1$ for every $x \in A_j$). Then, considering the grounded Hilbert spaces $(L^2(A_j),g_j)$, their grounded tensor product is given by
\begin{equation*}
    (L^2(A),g) \simeq \bigotimes_{j \in I} (L^2(A_j),g_j),
\end{equation*}
where $g \in \widehat{A}$ is the trivial character on $A$ (that is, $g(x)=1$ for every $x \in A$). More precisely, for any $\Delta \subset I$ finite the isometric map $\tau_{\Delta} \colon \bigotimes_{j \in \Delta} L^2(A_j) \to L^2(A)$ is defined on the elements of the kind $\otimes_{j \in \Delta} f_j$, where $f_j \in L^2(A_j)$, as the function of $L^2(A)$ given by
\begin{equation*}
    A \ni x \mapsto \prod_{j \in \Delta} f_j(x_j), \quad x=(x_j)_{j\in A} \in A.
\end{equation*}
Now consider $x = (x_j)_{j\in A} \in A$ and $\xi=(\xi_j)_{j\in A} \in \widehat{A} = \bigoplus_{j \in I} \widehat{A_j}$ and let $F \subset I$ be a  finite set of indices such that $\xi_j = 0$ for $j\in I\setminus F$. Following the same argument as above and using the fact that $w(x_j,\xi_j)g_j = g_j$ for $j \notin F$, for every finite $\Delta \supset F$ we have
\begin{equation*}
    \tau_{\Delta}\left( \otimes_{j \in \Delta} w(x_j,\xi_j)f_j\right) = w(x,\xi) \left[ \tau_{\Delta}(\otimes_{j \in \Delta} f_j)\right], \quad f_j \in L^2(A_j)
\end{equation*}
and so, taking the limit ad $\Delta \to I$, we have verified that
\begin{equation*}
    w(x,\xi) = \bigotimes_{j \in I} w(x_j,\xi_j).
\end{equation*}
\begin{remark}
    In this section, we assumed that the groups $A_j$ were 2-regular. However, this assumption is needed only to define the Weyl--Heisenberg operators. In fact, one can drop this assumption and repeat the above arguments with phase-space shifts $M_{\xi}T_x$ instead of $w(x,\xi)$.
\end{remark}

\section{Further examples}\label{sec examples}
In this section, we give some further examples to which our main theorems apply.
\subsection{Finite groups}
Theorem \ref{thm mainteo 1} applies to every finite Abelian group of odd order, endowed with the discrete topology. This restriction on the order stems from the fact that the doubling map $x \mapsto 2x$ is  injective, that is, 2-regular (since $A$ is finite), if and only if the order of $A$ is odd. Moreover, since the (open compact) subgroup $\{0\} \subset A$ is mapped into itself, the doubling map preserves the measure, that is, $\lambda_2=1$ in \eqref{eq:lambda_2}. As observed in the Introduction, in the case $A=\mathbb{Z}_d^n$, for $d\geq 3$ odd and $n\geq 1$ integer, Theorem \ref{thm mainteo 1} (combined with Remark \ref{rem rem2}) reduces to \cite[Theorem 2]{gross2006hudson}.

\subsection{\texorpdfstring{$p$}{p}-adic fields}
Another basic yet interesting example is given by the $p$-adic numbers $\Q_p$, for some prime number $p$. These are defined as the completion of $\Q$ with respect to the $p$-adic norm (see \cite[Page 34]{follandbook}) and are concretely described as series
\begin{equation}\label{eq:p-adic number series}
    \sum_{k=-\infty}^{\infty} x_k p^k,
\end{equation}
where $x_k \in \{0,\ldots,p-1\}$ for $k\in\mathbb{Z}$ and $x_k = 0$ for $k < m$, for some $m \in \Z$. With the addition defined in the natural way, $\Q_p$ is a locally compact group that is $\sigma$-compact (but not compact), totally disconnected, and also a metric space with the $p$-adic distance. Moreover, being $\Q_p \setminus\{0\}$ an LCA group under multiplication with the same topology, it is immediate to see that the doubling map $x \mapsto 2x$ is an isomorphism, and so $\Q_p$ is 2-regular. The group $\Q_p$ contains the open compact subgroup $\Delta_p$ of the $p$-adic integers, that is, those $p$-adic numbers whose series representation \eqref{eq:p-adic number series} is such that $x_k=0$ for $k < 0$. At this point, the situation divides into the case $p=2$ and $p \geq 3$.

For $p=2$ the doubling map $x \mapsto 2x$ is not a bijection on $\Delta_p$ (because $1/2$ is not a dyadic integer). In fact, since every element of $2\Delta_2$ is represented by a series \eqref{eq:p-adic number series} starting from $m=1$ and since either $x_0=0$ or $x_0=1$, $\Delta_2$ can be decomposed as $\Delta_2 = 2\Delta_2 \cup (1+2\Delta_2)$. Since both $2\Delta_2$ and $1+2\Delta_2$ are compact with the same finite measure, $|\Delta_2|=2|2\Delta_2|$ holds and therefore $\lambda_2=1/2$ in \eqref{eq:lambda_2}. Hence, Theorem \ref{thm mainteo2} applies to $A=\Q_2$.

On the contrary, when $p \geq 3$ the doubling map is a bijection on $\Delta_p$ (because $1/2$ is a $p$-adic integer). Therefore, $\lambda_2=1$ in \eqref{eq:lambda_2} and Theorem \ref{thm mainteo 1} applies to $A=\Q_p$, $p\geq 3$ prime.

\subsection{\texorpdfstring{$n$}{n}-adic groups}
The discussion of the previous section can be extended further to consider the $n$-adic groups $\Omega_n$ (see \cite[Definition 10.2]{hewitt63}), where $n\geq 2$ is any integer. Just like the $p$-adic numbers, the elements of $\Omega_n$ can be represented as a series
\begin{equation*}
    \sum_{k=-\infty}^{\infty} x_k n^k, \quad x_k \in \{0,\ldots,n-1\},
\end{equation*}
where $x_k=0$ for $k < m$, for some $m \in \Z$. Again, this is an LCA group with the addition defined in the natural way and contains the open compact subgroup of $n$-adic integers $\Delta_n$, that is, those elements whose series starts from $m \geq 0$. Since $\Omega_n$ in general is not a field (except when $n=p$ is prime, in which case $\Omega_n=\Q_p$), the proof of the fact that $\Omega_n$ is 2-regular requires some extra work. Before proving the following lemma, we introduce the sets
\begin{equation*}
    U_m = \{x \in \Omega_n \colon x_k = 0\  \mathrm{for\ all\ } k < m\} \quad m \in \Z,
\end{equation*}
which are compact open subgroups of $\Omega_n$ and form a neighborhood basis of $0 \in \Omega_n$ (see \cite[Theorem 10.5]{hewitt63}). In particular, we note that $\Delta_n = U_0$.
\begin{lemma}\label{lem:Omega_n is regular}
    The groups $\Omega_n$ are 2-regular for every $n\geq 2$ integer. Moreover, $\lambda_2=1$ in \eqref{eq:lambda_2} if $n$ is odd and $\lambda_2 < 1$ if $n$ is even.
\end{lemma}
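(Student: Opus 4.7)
The plan is to establish bijectivity and openness of the doubling map on $\Omega_n$ by a direct digit-level analysis, and then to compute $\lambda_2$ by identifying the index $[\Delta_n : 2\Delta_n]$.

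\emph{Bijectivity and openness.} For injectivity, I would take $x = \sum_{k\geq m} x_k n^k$ with $x_m \neq 0$; the digit of $2x$ at position $m$ is $2x_m \bmod n$, which vanishes only if $n$ is even and $x_m = n/2$. In that exceptional case the carry of $1$ propagates to position $m+1$, where the resulting digit of $2x$ is $(2x_{m+1}+1) \bmod n$, an odd number that cannot be zero when $n$ is even. Hence $2x \neq 0$ in all cases. For surjectivity, multiplication by $n$ is a topological group automorphism of $\Omega_n$, with inverse the downward shift $\sum y_k n^k \mapsto \sum y_{k+1} n^k$. So when $n = 2n'$ is even, the element $x := n'\cdot (y/n)$ satisfies $2x = y$; when $n$ is odd, a standard digit-by-digit recursion (relying on the invertibility of $2$ modulo $n^k$ for every $k$) produces $x \in \Delta_n$ with $2x = y$, and the result extends from $\Delta_n$ to $\Omega_n = \bigcup_m n^{-m}\Delta_n$ via the commutation of doubling with multiplication by $n$. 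Openness of $x\mapsto 2x$ is then automatic since $\Omega_n$ is $\sigma$-compact, locally compact, and second countable (hence Polish), so the open mapping theorem applies; thus $\Omega_n$ is $2$-regular.

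\emph{Computation of $\lambda_2$.} By \eqref{eq:lambda_2}, $\lambda_2 = [\Delta_n : 2\Delta_n]^{-1}$. For $n$ odd, the recursion above yields $2\Delta_n = \Delta_n$, so $\lambda_2 = 1$. For $n$ even, the map $\phi \colon \Delta_n \to \Z/2\Z$, $y \mapsto y_0 \bmod 2$, is a continuous group homomorphism (the parity of $y_0$ is preserved under base-$n$ addition because $n$ is even), with kernel $\{y \in \Delta_n : y_0 \text{ even}\}$ of index $2$. The inclusion $2\Delta_n \subseteq \ker \phi$ is immediate from $y_0 \equiv 2x_0 \pmod n$; the reverse inclusion follows from a refined carry-choice recursion --- given $y \in \ker\phi$, one produces $x \in \Delta_n$ with $2x = y$ by selecting at each step the carry $c_k \in \{0,1\}$ to match the parity of $y_{k+1}$. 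Hence $[\Delta_n : 2\Delta_n] = 2$ and $\lambda_2 = 1/2 < 1$.

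The main obstacle I anticipate is the careful bookkeeping of carries throughout the digit expansions, especially in the carry-choice recursion used to establish the reverse inclusion $\ker\phi \subseteq 2\Delta_n$ for $n$ even; the key structural reason this works is that, because $n$ is even, the parity of a base-$n$ digit is never altered by a carry.
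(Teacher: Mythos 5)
Your proof is correct, and while the injectivity step coincides with the paper's (the same digit analysis: the only way $2x_m\equiv 0$ can happen with $x_m\neq 0$ is $n$ even, $x_m=n/2$, and then the carried $1$ forces an odd, hence nonzero, digit at position $m+1$), the rest takes a genuinely different route. The paper runs a single explicit digit-and-carry recursion for surjectivity in both parities, and extracts everything else from it: the inclusion $2U_{m-1}\supset U_m$ gives openness directly, while $2U_m=U_m$ (for $n$ odd) versus $2U_m\subsetneq U_m$ (for $n$ even) gives $\lambda_2=1$ or $\lambda_2<1$. You instead argue structurally: for $n=2n'$ even you produce a preimage in closed form, $x=n'\cdot(y/n)$, using that integer multiplication by $n$ is the digit shift (true, though you assert rather than verify it; it follows from continuity and the behaviour on terminating expansions); for $n$ odd you use invertibility of $2$ modulo $n^k$ and the commutation of doubling with the shift; and you get openness from the open mapping theorem for $\sigma$-compact locally compact groups rather than from an explicit inclusion of neighbourhoods. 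For $\lambda_2$ you compute the index $[\Delta_n:2\Delta_n]$ via the parity character $\phi(y)=y_0\bmod 2$ (a homomorphism precisely because $n$ is even), and your carry-choice recursion for $\ker\phi\subseteq 2\Delta_n$, though only sketched, is based on the correct mechanism (choose the carry out of position $k$ to match the parity of $y_{k+1}$). This buys you more than the paper states: you obtain the exact value $\lambda_2=1/2$ for $n$ even, whereas for the lemma as stated the easy inclusion $2\Delta_n\subseteq\ker\phi\subsetneq\Delta_n$ already gives $\lambda_2\leq 1/2<1$, so the delicate reverse inclusion is optional. The trade-off is that the paper's argument is fully self-contained and its explicit recursion is reused elsewhere (e.g., in the solenoid lemma, which cites the solvability of $2x=y$ in $\Delta_n$ according to the parity of $y_0$), while yours leans on standard structural facts to avoid most of the carry bookkeeping.
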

\begin{proof}
    The continuity of the doubling map $x \mapsto 2x$ is clear, so we start by proving that it is injective. Suppose $x \in \Omega_n$ is such that $2x=0$ and $x_k=0$ for $k<m$, which means
    \begin{equation*}
        2\sum_{k=m}^{\infty} x_k n^k = 0.
    \end{equation*}
    If $n$ is odd, the map $x \mapsto 2x$ is an isomorphism of $\Z_n$, so it must be $x_k = 0$ for every $k \geq m$. If $n$ is even, suppose that $x_m \neq 0$. Hence, it must be $2x_m=n$, so $x_m = n/2$, but this implies that $2x_{m+1}+1 \equiv 0$ mod $n$, which is absurd since $n$ is even.

    We can now move to surjectivity. Given $y = \sum_{k = m}^{\infty}y_kn^k$ we want to find $x \in \Omega_n$ such that $2x=y$. We consider two separate cases.
    \begin{itemize}
        \item \textbf{$n$ odd}: we take $x_k = 0$ for $k < m$, while we choose $x_m$ to be the only solution of $2x_m \equiv y_m$ mod $n$. Then, we can rephrase the equation $2x=y$ as
        \begin{equation*}
            an^{m+1} +2\sum_{k = m+1}^{\infty}x_k n^k = \sum_{k=m+1}^{\infty} y_k n^k,
        \end{equation*}
        where $a=0$ if $2x_m < n$ and $a=1$ otherwise and repeat the argument.
        \item \textbf{$n$ even}: we take $x_k = 0$ for $k < m-1$. If $y_m$ is even, we choose $x_{m-1}=0$ and therefore $x_m$ must satisfy $2x_m \equiv y_m$ mod $n$. If $y_{m+1}$ is even, we choose $x_m=y_m/2$, while if $y_{m+1}$ is odd, we choose $x_m=y_m/2+n/2$. On the other hand, if $y_m$ is odd, we choose $x_{m-1}=n/2$ and, therefore, $x_m$ must satisfy $2x_m + 1 \equiv y_m$ mod $n$. If $y_{m+1}$ is even, we choose $x_m = (y_m-1)/2$, while if $y_{m+1}$ is odd, we choose $x_m=(y_m-1)/2+n/2$. 
        
        Then, the equation $2x=y$ becomes 
        \begin{equation*}
            an^{m+1}+2\sum_{k=m+1}^{\infty}x_kn^k = \sum_{k=m+1}^{\infty}y_k n^k,
        \end{equation*}
        where $a=0$ if $y_{m+1}$ is even and $a=1$ otherwise and we can repeat the argument.
    \end{itemize}
    To conclude, we note that in the surjectivity proof we obtained $2U_{m-1} \supset U_m$, which implies that the doubling map $x \mapsto 2x$ is open (since $\{U_m\}_{m \in \Z}$ is a basis of open neighborhoods of 0). Moreover, from the same observation we see that when $n$ is odd, $2U_m=U_m$ and therefore $\lambda_2=1$, while when $n$ is even $2U_m$ is strictly contained in $U_m$ and therefore $\lambda_2 < 1$.
\end{proof}
So, when $n$ is odd, Theorem \ref{thm mainteo 1} applies to $A=\Omega_n$, while when $n$ is even, Theorem \ref{thm mainteo2} applies.

\subsection{Solenoid groups}
To conclude, we show an example where Corollary \ref{cor:A connected compact} applies. To this end, with the notation of the previous section, let $n\geq 2$ be an integer and consider $G = \R \times \Delta_n$. Let $u = 1 \in \Delta_n$ (that is, the series where $x_0=1$ and $x_k=0$ for every $k \geq 1$) and consider the subgroup $H = \{(k,ku)\}_{k \in \Z}$ of $G$. Then, the \emph{$n$-adic solenoid group} is defined as $\Sigma_n \coloneqq G / H$ and is a compact connected Hausdorff Abelian group (see \cite[Definition 10.12, Theorem 10.13]{hewitt63}). The group $\Sigma_n$ can be realized concretely as $[0,1) \times \Delta_n$ with the operation defined as
\begin{equation*}
    (a,x)+(b,y) = (a+b-[a+b],x+y-[a+b]u), \quad (a,x),(b,y) \in [0,1) \times \Delta_n
\end{equation*}
where $[\cdot]$ denotes the floor function (see \cite[Theorem 10.15]{hewitt63}). With this realization, it is easy to see that $\Sigma_n$ is 2-regular if and only if $n$ is even.
\begin{lemma}
    The group $\Sigma_n$ is 2-regular if and only if $n$ is even.
\end{lemma}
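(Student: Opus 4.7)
The plan is to work in the concrete realization $\Sigma_n \simeq [0,1) \times \Delta_n$ and write the doubling map explicitly. From the group law recalled just above the lemma, for $(a,x) \in [0,1) \times \Delta_n$ one has
\[
2(a,x) = \begin{cases} (2a,\, 2x), & a \in [0,1/2),\\ (2a-1,\, 2x-u), & a \in [1/2,1),\end{cases}
\]
so inverting the doubling map amounts to choosing, for a target $(b,y)$, one of the two candidates $(b/2,\, w)$ with $2w=y$, or $((b+1)/2,\, w')$ with $2w' = y+u$, where $w,w' \in \Delta_n$. The whole argument will revolve around whether these two candidates are each individually realizable in $\Delta_n$ and whether both can be realized simultaneously, which reduces to understanding the image $2\Delta_n \subset \Delta_n$ for $n$ even vs.\ odd.

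For $n$ odd, Lemma~\ref{lem:Omega_n is regular} (applied to $\Omega_n$, together with the fact that $2\Delta_n \subset \Delta_n$ is compact open of the same measure as $\Delta_n$) gives $2\Delta_n = \Delta_n$, so in particular $u \in 2\Delta_n$. Hence both candidate preimages above always exist in $[0,1) \times \Delta_n$, and they are distinct (the first coordinates $b/2$ and $(b+1)/2$ lie in disjoint halves of $[0,1)$). So the doubling map is $2$-to-$1$, in particular not injective, and $\Sigma_n$ is not $2$-regular.

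For $n$ even, I would first record the structural fact $\Delta_n = 2\Delta_n \sqcup (u+2\Delta_n)$: an element of $\Delta_n$ belongs to $2\Delta_n$ iff its leading ($n^0$) digit is even (since $n$ is even, doubling preserves parity of this digit and cannot produce an odd digit there), and adding $u$ flips that parity without generating a carry. Surjectivity of the doubling map then follows: given $(b,y)$, use the first candidate if $y_0$ is even and the second if $y_0$ is odd. Injectivity follows from the same decomposition: if $(a,x)$ and $(a',x')$ both map to $(b,y)$ and fall in different halves of $[0,1)$, subtracting the two relations forces $u \in 2\Delta_n$, a contradiction; if they fall in the same half, injectivity of doubling on $\Omega_n$ (Lemma~\ref{lem:Omega_n is regular}) finishes the job. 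Since $\Sigma_n$ is compact Hausdorff, a continuous bijection is automatically a homeomorphism, so the doubling map is a topological automorphism and $\Sigma_n$ is $2$-regular.

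The main subtlety — more a bookkeeping trap than a genuine obstacle — is that the na\"ive expectation ``$2$-regular iff $n$ odd'', which holds for $\Omega_n$ and $\Delta_n$ in Lemma~\ref{lem:Omega_n is regular}, flips here: the two failures of bijectivity ($2$-to-$1$ on the circle factor $[0,1)$ from the carry, and non-surjectivity on $\Delta_n$ when $n$ is even) are glued together by $H$ and precisely cancel when $n$ is even, while for $n$ odd the $\R$-side $2$-to-$1$ phenomenon is not absorbed. Care is needed to track the carry term $-[2a]u$ correctly when combining the two cases; once this is done the argument above goes through directly.
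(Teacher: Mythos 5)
Your proposal is correct and follows essentially the same route as the paper: both arguments reduce everything to the solvability of $2x=y'$ in $\Delta_n$ according to the parity of the leading digit (via the constructions in Lemma \ref{lem:Omega_n is regular}), use the dichotomy on the $[0,1)$-coordinate together with the fact that $u\in 2\Delta_n$ exactly when $n$ is odd, and conclude openness from a continuous bijection of a compact space onto a Hausdorff space. The only difference is presentational: you count preimages (two candidates, giving a $2$-to-$1$ map for $n$ odd and making the decomposition $\Delta_n=2\Delta_n\sqcup(u+2\Delta_n)$ explicit), whereas the paper computes the kernel of the doubling map.
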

\begin{proof}
    We start by proving that $x \mapsto 2x$ is injective if and only if $n$ is even. In fact, suppose that $(a,x) \in \Sigma_n$ is such that $2(a,x)=(0,0)$. Then, either $a=0$ or $a=1/2$. In the former case $x$ must satisfy $2x=0$, but this implies $x=0$ (since $\Omega_n$ is 2-regular), while in the latter case $x$ must solve $2x=u$, but from the proof of Lemma \ref{lem:Omega_n is regular} we know that this equation has a solution in $\Delta_n$ if and only if $n$ is odd. In particular, if $n$ is odd, then $\Sigma_n$ is not 2-regular since $x \mapsto 2x$ is not injective.

    Now, to prove surjectivity when $n$ is even, given $(b,y) \in \Sigma_n$ we want to find $(a,x) \in \Sigma_n$ such that $2(a,x)=(b,y)$, which means
    \begin{equation*}
        2a-[2a]=b, \quad 2x=y+[2a]u \eqqcolon y'.
    \end{equation*}
    From the proof of Lemma \ref{lem:Omega_n is regular} we know that the second equation has a solution in $\Delta_n$ if and only if $y'_0$ is even. So, if $y_0$ is even, we choose $a = b/2$, while if $y_0$ is odd, we choose $a=(b+1)/2$, so that $2x=y+[2a]u$ has a solution in $\Delta_n$.

    To conclude, we notice that for even $n$, the doubling map $x \mapsto 2x$ is a continuous bijective map $\Sigma_n\to \Sigma_n$ between a compact space and a Hausdorff space; hence it is open.
\end{proof}
Therefore, when $n$ is even, Corollary \ref{cor:A connected compact} applies to $\Sigma_n$.
\section*{Acknowledgments}
F.~N. is a fellow of the Accademia delle Scienze di Torino and a member of the Societ\`a Italiana di Scienze e Tecnologie Quantistiche (SISTEQ). 


	

\end{document}